\theoremstyle{plain}
\newtheorem{theorem}{Theorem}
\newtheorem*{theorem*}{Theorem}
\newtheorem{proposition}[theorem]{Proposition}
\newtheorem{lemma}[theorem]{Lemma}
\newtheorem{corollary}[theorem]{Corollary}
\newtheorem{observation}[theorem]{Observation}
\newtheorem{definition}[theorem]{Definition}
\newtheorem*{problem*}{Problem}
\newtheorem{remark}[theorem]{Remark}
\newtheorem{example}[theorem]{Example}
\newcommand{\C}{\mathbb{C}}
\newcommand{\N}{\mathbb{N}}
\newcommand{\M}{\mathrm{M}}
\DeclareMathOperator{\rk}{rank}
\DeclareMathOperator{\rank}{rank}
\newcommand{\proj}[1]{\ket{#1}\!\bra{#1}}
\newcommand{\cD}{\mathcal{D}}
\newcommand{\cH}{\mathcal{H}}
\newcommand{\linspan}{\mathrm{span}}
\DeclareMathOperator{\diag}{diag}
\DeclareMathOperator{\Tr}{Tr}
\title{The Haemers bound of noncommutative graphs}
\author{Sander Gribling\thanks{Centrum Wiskunde \& Informatica (CWI) and QuSoft, Science Park 123, 1098XG, Amsterdam, Netherlands. {\tt \{gribling,Yinan.Li\}@cwi.nl}.} \and Yinan Li\footnotemark[1]}
\date{}
\begin{document}

\maketitle
\begin{abstract}
We continue the study of the quantum channel version of Shannon’s zero-error capacity problem.
We generalize the celebrated Haemers bound to noncommutative graphs (obtained from quantum channels). We prove basic properties of this bound, such as  additivity under the direct sum and submultiplicativity under the tensor product. The Haemers bound upper bounds the Shannon capacity of noncommutative graphs, and we show that it can outperform other known upper bounds, including noncommutative analogues of the Lov\'asz theta function (Duan--Severini--Winter, \emph{IEEE Trans.~Inform.~Theory}, 2013 and Boreland--Todorov--Winter, \emph{arXiv}, 2019).
\end{abstract}

\noindent {\small \textbf{Keywords:} Haemers bound -- Noncommutative graphs -- Quantum channels -- Shannon capacity -- Zero-error information theory}

\section{Introduction}

The celebrated \emph{Shannon capacity} of a graph $G$ is defined as 
\begin{equation}\label{eq: Shannon capacity of graph}
    \Theta(G)=\sup_k\sqrt[k]{\alpha(G^{\boxtimes k})}=\lim_{k\to\infty}\sqrt[k]{\alpha(G^{\boxtimes k})},
\end{equation}
where $\alpha(G)$ denotes the independence number of $G$ and $\boxtimes$ denotes the strong graph product~\cite{MR0089131}. The logarithm of $\Theta(G)$ characterizes the amount of information that can be transmitted through a classical communication channel, with zero error, where we allow an arbitrary number of uses of the channel and we measure the average amount of information transmitted per use of the channel. (The graph $G$ is the so-called \emph{confusability graph} associated to the channel, see Section~\ref{sec:classical to quantum}.) 
The Shannon capacity is not known to be \emph{computable}: Even though computing the independence number is NP-complete~\cite{MR0378476}, there exist graphs whose Shannon capacities are not achieved by taking the strong graph product with itself \emph{finitely} many times~\cite{54907}.
 
To upper bound the Shannon capacity, Lov\'asz introduced the celebrated theta function~\cite{lovasz1979shannon}, which can be cast as a semidefinite program and can be used to compute, e.g.,~$\Theta(C_5)$. Lov\'asz posed the question whether the Shannon capacity equals the theta function in general, which has been refuted by Haemers:  He introduced another upper bound on the Shannon capacity, now known as the Haemers bound, which can be strictly smaller than the theta function on some graphs~\cite{Haemers1978,haemers1979some}. 

Instead of a \emph{classical} communication channel, we could also consider a \emph{quantum} communication channel. Doing so leads to quantum information analogues of the aforementioned questions, the study of which was systematically initiated by Duan, Severini and Winter~\cite{duan2013}. {In Section~\ref{sec:classical to quantum} we show how the quantum setting generalizes the classical setting, which also motivates the following definitions. To (the Choi-Kraus representation of) a quantum channel $\Phi(A)=\sum_{k=1}^m E_kAE_k^\dagger$ ($\forall\ A\in M_n$) we associate the \emph{noncommutative (confusability) graph} $S_\Phi=\linspan\{E_k^\dagger E_k':~k,k'=1,\dots,m\}$.}
The noncommutative graph $S_\Phi$ completely characterizes the number of zero-error messages one can send through the quantum channel $\Phi$. 
More precisely, the independence number of $S\subseteq M_n$ is defined as the maximum number~$\ell$ for which there exist non-zero vectors (pure quantum states) $\ket{\psi_1},\dots,\ket{\psi_\ell}\in\C^n$ satisfying that $\bra{\psi_i}A\ket{\psi_j}=0$ for all distinct $i,j\in[\ell]$ and $A\in S$. We denote this by $\alpha(S)$. The Shannon capacity of a noncommutative graph $S$ is defined analogously as $\Theta(S)=\lim_{n\to\infty}\sqrt[n]{\alpha(S^{\otimes n})}$, where $\otimes$ is the tensor product~\cite{duan2013}. 

As in the classical setting, it is not known whether the Shannon capacity of noncommutative graphs is computable. We do know that computing the independence number of a noncommutative graph is QMA-hard~\cite{beigi2007complexity}.\footnote{In fact, computing the independence number of noncommutative graphs obtained from \emph{entanglement-breaking} channels is already QMA-complete~\cite{beigi2007complexity}.}
To upper bound the Shannon capacity of noncommutative graphs, it is natural to consider lifting bounds on the Shannon capacity of classical channels to the quantum setting. Duan, Severini and Winter introduced a quantum version of the Lov\'asz theta function on noncommutative graphs~\cite{duan2013}, which ``properly'' generalizes the theta function to the noncommutative graph setting. Recent studies have extended many other interesting graph notions to noncommutative graphs~\cite{stahlke2016,weaver2017quantum,8355678,weaver2018quantum,boreland2019sandwich}.
However, it remained an open question (as mentioned in~\cite{8355678}) how to generalize the Haemers bound to noncommutative graphs.

In this paper we show how to do so. We define a Haemers bound for noncommutative graphs, which canonically generalizes the classical Haemers bound of graphs (over complex numbers). Similar to the classical case, we prove that our bound upper bounds the Shannon capacity of noncommutative graphs. Our definition is inspired by the definition of noncommutative analogue the orthogonal rank~\cite{8355678}, combined with an 
observation that in the classical graph setting the orthogonal rank is a positive semidefinite version of the Haemers bound~\cite{peeters1996orthogonal}. We also compare our bound with other existing Shannon capacity bounds of noncommutative graphs. 

\section{Preliminaries}\label{sec: prelimiary}
Throughout, all scalars will be complex numbers. We use the Dirac notation for (unit) vectors, e.g., $\ket{i}$ stands for the $i$-th standard basis vector (whose dimension will be clear from context). Let $[n]=\{1,\dots,n\}$. Let $M_{n\times m}(\mathcal{X})$ be the set of $n\times m$ complex-valued matrices whose entries are from some ring $\mathcal{X}$. When $\mathcal{X}$ is omitted, we assume $\mathcal{X}=\C$. Let $M_n(\mathcal X) = M_{n \times n}(\mathcal{X})$. For a matrix $B\in M_{n\times m}(\mathcal X)$, we sometimes write $B=[B_{i,j}]_{i\in [n],j\in[m]}$, where $B_{i,j}\in \mathcal{X}$ denotes the $(i,j)$-th entry of $B$. A matrix $B\in M_{m \times n}$ has rank \emph{at most} $k$ if and only if there exist $C\in M_{k\times m}$ and $D\in M_{k\times n}$ such that $B=C^\dagger D$. We use $B\succeq 0$ to denote the positive-semidefiniteness of $B$ and $M_n^+$ to denote the set of \emph{positive semidefinite} matrices of size $n$-by-$n$. We use $\cD_n$ to denote the set (linear space) of diagonal matrices of size $n$-by-$n$. The \emph{trace} of a matrix $B\in M_n$ is the sum of its diagonal elements, i.e.,~$\Tr(B)=\sum_{i=1}^n B_{i,i}$. The \emph{Hilbert-Schmidt} inner product of two matrices $A,B\in M_{n\times n'}$ is $\Tr(A^\dagger B)$, where $A^\dagger$ denotes the conjugate transpose of $A$. The tensor product of two matrices $A\in M_{n\times n'}$ and $B\in M_{m\times m'}$ is the matrix
$$
A\otimes B=
\begin{bmatrix}
A_{1,1}B&\cdots&A_{1,n'}B\\
\vdots&&\vdots\\
A_{n,1}B&\cdots&A_{n,n'}B
\end{bmatrix}\in M_{n\times n'}(M_{m\times m'})\cong M_{nm\times n'm'}.
$$
The tensor product of two linear subspaces $S\subseteq M_{n\times n'}$ and $T\subseteq M_{m\times m'}$ is $S\otimes T=\linspan\{A\otimes B:\ A\in S,~B\in T\}$.

Throughout we assume that a graph is finite, simple, undirected and has no loops. Unless specified otherwise, we consider $n$-vertex graphs with vertices labeled by elements of the set $[n]$. For a graph $G=([n],E)$ we use $\overline{G}=([n],\Lambda_n\setminus E)$ to denote the complement graph of $G$, where $\Lambda_n=\{\{i,j\}:i,j\in[n]~\text{and}~i\neq j\}$. We use $K_n=([n],\Lambda_n)$ to denote the $n$-vertex complete graph and $\overline{K_n}=([n],\emptyset)$ to denote the $n$-vertex empty graph. 

An independent set of a graph $G=([n],E)$ is a subset of vertices $I\subseteq [n]$ such that any two vertices in $I$ are not adjacent, i.e., for all distinct $i,j\in I$ we have $\{i,j\}\not\in E$. The independence number of $G$, denoted by $\alpha(G)$, is the largest cardinality among all independent sets. For two graphs $G=([n],E)$ and $H=([n'],F)$, the strong graph product of $G$ and $H$, denoted by $G\boxtimes H$, is a graph with vertex set $[n]\times [n']$ and edge set $\{\{(i,k),(j,\ell)\}:~\{i,j\}\in E~\text{or}~i=j\in[n]~\text{and}~\{k,\ell\}\in F~\text{or}~k=\ell\in[n']\}\setminus\{\{(i,k),(i,k)\}:~i\in[n],~k\in[n']\}$. The independence number is supermultiplicative with respect to the strong graph product: $\alpha(G\boxtimes H)\geq \alpha(G)\alpha(H)$. The Shannon capacity of $G$, denoted by $\Theta(G)$, is defined as $\Theta(G)=\sup_k\sqrt[k]{\alpha(G^{\boxtimes k})}=\lim_{k\to\infty}\sqrt[k]{\alpha(G^{\boxtimes k})}$, where the limit equals the supremum due to Fekete's lemma.

We say there is a graph homomorphism from $G=([n],E)$ to $H=([n'],F)$, denoted by $G\to H$, if there exist a vertex map $\varphi: [n]\to[n']$ which preserves the adjacency relation, i.e.,~for any $\{i,j\}\in E$ we have $\{\varphi(i),\varphi(j)\}\in F$. 
We say that there is a graph cohomomorphism from $G$ to $H$, denoted by $G \leq H$, if $\overline G \to \overline H$.  The independence number can be equivalently defined in terms of graph (co)homomorphism: 
\[\alpha(G)=\max\{n:K_n\to\overline{G}\} =\max\{n:\overline{K_n}\leq G\}.
\]

\subsection{From the Shannon capacity of graphs to the Shannon capacity of noncommutative graphs} \label{sec:classical to quantum}
We briefly recall the development of noncommutative graph theory, which is originally motivated by the study of the quantum channel version of Shannon’s zero-error capacity problem~\cite{duan2013}.

The connection between information theory and graph theory was first observed by Shannon~\cite{MR0089131} in the study of the zero-error capacity problem: 
\begin{center}
    \emph{How many messages can be send through a communication channel with zero error?}
\end{center}
Here a classical communication channel $N$ is represented by the  transition probability function $N: X\to Y$ from the (finite) input alphabet $X$ to the (finite) output alphabet $Y$, where $N(y|x)$ denotes the probability of getting output $y$ conditioned on sending input $x$ through the channel $N$. Two input symbols $x,x' \in X$ can be confused by the receiver if there exists an output $y \in Y$ satisfying $N(y|x) > 0$ and $N(y|x') > 0$. To transmit messages through $N$ with zero error, these messages should be encoded into input symbols of which their outputs should not be confused by the receiver. 

To estimate the zero-error capacity, Shannon associated to each channel $N$ the \emph{confusability} graph $G_N$, where the vertices of $G_N$ are all possible input symbols in $X$, and two vertices $x,x'\in X$ are connected if they can be confused. A set of input symbols is said to be not confused if every pair of input symbols can not be confused. Such a set of input symbols thus forms an independent set in the confusability graph $G_N$. Hence, the maximum number of zero-error messages one can send via a single-use of $N$ equals the \emph{independence number} $\alpha(G_N)$ of $G_N$, which is a classical notion in graph theory. It is also not hard to see that to any graph $G$ one can also associate a classical communication channel $N$ such that $G$ and $G_N$ are the same (hence from now on we can focus on graphs instead of channels). Finally, if we are allowed to send length-$k$ codewords through $N$ (namely, use the channel $k$ times nonadaptively), the confusability graph of the resulting channel is given by $G_N^{\boxtimes k}$, the $k$-fold strong graph product of $G_N$. 
It is then natural to see the Shannon capacity of a graph (given in Equation~\eqref{eq: Shannon capacity of graph}) as the number of distinct messages per use some classical channel can communicate with no error, in the asymptotic limit. 

The same zero-error capacity problem can be also studied in the context of quantum information, where classical communication channels are replaced by quantum communication channels. (We refer the readers to~\cite{Nielsen2010} for a nice introduction.) Mathematically speaking, a \emph{quantum channel} is a \emph{completely positive and trace preserving} linear map $\Phi: M_n\to M_{n'}$. Equivalently, $\Phi :M_n \to M_{n'}$ is a map that admits a \emph{Choi-Kraus representation}: $\Phi(A)=\sum_{i=1}^m E_i A E_i^\dagger$ for all $A\in M_n$, where $\sum_{i=1}^m E_i^\dagger E_i=I_n$. The matrices $E_1,\dots,E_m\in M_{n'\times n}$ are called the \emph{Choi-Kraus operators} of $\Phi$. The input and output symbols of a quantum channel $\Phi: M_n\to M_{n'}$ are \emph{quantum states}, i.e.~positive semidefinite matrices with trace equal to one. Two quantum states are nonconfusable (perfectly distinguishable) if and only if they are orthogonal (with respect to the Hilbert-Schmidt inner product). Note that a classical communication channel $N: X\to Y$ can be viewed as the quantum channel $\Phi_N: M_{|X|}\to M_{|Y|}$, whose {Choi-Kraus} representation can be chosen as
\begin{equation}\label{eq: classical channel as quantum channel}
    \Phi_N(A)=\sum_{y\in Y,x\in X}N(y|x)\ket{y}\!\bra{x}A\ket{x}\!\bra{y}.
\end{equation}

To transmit (classical) zero-error messages through a quantum channel $\Phi: M_n\to M_{n'}$, one can encode the messages into quantum states and send those through the channel $\Phi$. In order to decode the messages with zero error, the channel-outputs corresponding to different messages should be orthogonal. The maximum number of zero-error messages one can send via a single-use of the quantum channel~$\Phi$ is the maximum number $\ell$ of distinct input states $\rho_1,\dots,\rho_\ell$ satisfying $\Tr(\Phi(\rho_i)^\dagger\Phi(\rho_j))=0$ for all $i,j\in[\ell]$. Due to the fact that quantum channels are completely positive, we may (w.l.o.g.) assume that each $\rho_i$ has rank $1$, that is, $\rho_i=\proj{\psi_i}$ for some unit vector $\ket{\psi_i}\in\C^n$. The orthogonality relations between the channel-outputs $\Phi(\proj{\psi_i})$ and $\Phi(\proj{\psi_j})$ then read as follows: 
$$\Tr(\Phi(\proj{\psi_i})^\dagger\Phi(\proj{\psi_j}))=\sum_{k,k'=1}^m\Tr(E_k\proj{\psi_i}E_k^\dagger E_{k'}\proj{\psi_j}E_{k'}^\dagger)=\sum_{k,k'=1}^m|\bra{\psi_i}E_k^\dagger E_{k'}\ket{\psi_j}|^2=0.$$
Observe that the maximum number of perfectly distinguishable messages, $\ell$, is completely determined by the subspace $S_\Phi=\linspan\{E_k^\dagger E_{k'}:k,k'\in[m]\}\subseteq M_n$. The subspace $S_\Phi$ is named as the \emph{noncommutative (confusability) graph} of the quantum channel $\Phi$ in~\cite{duan2013}. As we show below, the noncommutative graphs of quantum channels shall be viewed as a quantum generalization of the confusability graphs of classical channels. 
Note that $S_\Phi$ is self-adjoint, i.e.,  $A\in S_\Phi$ implies $A^\dagger \in S_\Phi$, and that $I_n\in S_\Phi$. Subspaces of $M_n$ satisfying these two properties are known as operator systems in functional analysis. In the rest of this paper, we refer to such subspaces as noncommutative graphs. This is justified, since we can also associate to every operator system a quantum channel of which the noncommutative (confusability) graph coincides with the operator system~\cite{duan2009super,cubitt2011}. 

As mentioned before, the independence number of a noncommutative graph $S\subseteq M_n$ (also denoted as $\alpha(S)$) is the maximum number $\ell$ of unit vectors $\ket{\psi_1},\dots,\ket{\psi_\ell}$ for which $\bra{\psi_i}A\ket{\psi_j}=0$ for all $i\neq j$ and $A\in S_\Phi$.\footnote{Another way to interpret this is to  view matrices in $S_\Phi$ as ``edges'' and all (unit) vectors in $\C^n$ as ``vertices''. The equalities $\bra{\psi_i}A\ket{\psi_j}=0$ for all $A \in S$ then indicate that $\ket{\psi_i}$ and $\ket{\psi_j}$ are ``nonadjacent''.} The independence number $\alpha(S)$ is exactly the maximum number of zero-error messages one can transmit via a single-use of any quantum channel $\Phi$ whose noncommutative graph is $S$. Note that the use of length-$k$ quantum codewords of a channel $\Psi$ results in the noncommutative graph $S_\Phi^{\otimes k}$. Define the Shannon capacity of a noncommutative graph $S$ as
\begin{equation}\label{eq: Shannon capacity of nc graph}
    \Theta(S)=\sup_k\sqrt[k]{\alpha(S^{\otimes k})}=\lim_{k\to\infty}\sqrt[k]{\alpha(S^{\otimes k})}.
\end{equation}
The Shannon capacity of a noncommutative graph is exactly the number of distinct messages per use some quantum channel can communicate with no error, in the asymptotic limit. 

Since classical channels are special cases of quantum channels. One can associate a classical channel $N$ (viewed as the quantum channel given in Equation~\eqref{eq: classical channel as quantum channel}) the noncommutative graph
\[
S_N=\linspan\{\ket{x}\!\bra{x'}:~\exists~y\in Y,~\sqrt{N(y|x)N(y|x')}>0\}.
\]
Note that $S_N$ is in one-to-one correspondence (up to graph isomorphism) with the confusability graph $G_N$ of $N$: it is the matrix space whose support equals $G_N$ (supplemented with the diagonal matrices). In other words, to every $n$-vertex graph $G = ([n],E)$ we associate the noncommutative graph 
\begin{equation}\label{eq: noncommutative graph of graph}
S_G \coloneqq \linspan\{\ket{i}\!\bra{j}:\ i=j\in [n]~{\rm or}~\{i,j\}\in E\} \subseteq M_n.
\end{equation}
We emphasize that this correspondence is canonical for several reasons. First of all, let us call two noncommutative graphs $S,T\subseteq M_n$ \emph{isomorphic} if there exist an $n\times n$ unitary matrix $U$ such that $U^\dagger S U=T$. (In the classical setting this corresponds to the situation where the input symbols are permuted.)
Then the graphs $G$ and $H$ are isomorphic if and only if $S_G$ and $S_H$ are isomorphic~\cite{ORTIZ2015128}. Second, the disjoint union of two graphs $G$ and $H$ is mapped to the \emph{direct sum} of the corresponding noncommutative graph $S_G$ and $S_H$ and the strong graph product of two graphs $G$ and $H$ is mapped to the \emph{tensor product} of the corresponding noncommutative graph $S_G$ and $S_H$. Last but not least, we have $\alpha(G) = \alpha(S_G)$~\cite{duan2013} and therefore, $\Theta(G) = \Theta(S_G)$. 

\subsection{Prior work on the Shannon capacities of graphs and noncommutative graphs}
\paragraph{The graph setting.} 
As we have mentioned before, it is not known how to compute the Shannon capacity of a graph. It is not even known whether it is computable at all. It therefore makes sense to consider bounds on the Shannon capacity. \emph{Lower bounds} often arise from explicit constructions of stable sets in some power of the (noncommutative) graph. For instance, the state-of-the-art lower bound on $\Theta(C_7)$ comes from an independent set of size $367$ in $C_7^{\boxtimes 5}$~\cite{polak2018new}. A natural way to obtain \emph{upper bounds} on the Shannon capacity is to obtain \emph{submultiplicative} (with respect to the strong graph product) upper bounds on the independence number. Examples of such upper bounds are the fractional clique-cover number~\cite{MR0089131}, the Lov\'asz theta number~\cite{lovasz1979shannon}, and parameters such as the (fractional) Haemers bound~\cite{Haemers1978,haemers1979some,blasiak2013graph,bukh2018fractional} and the orthogonal (projective) rank~\cite{lovasz1979shannon,manvcinska2016quantum}. The last three parameters are central to this work, which is why we define them below.

Let $G$ be a graph. The Lov\'asz theta number of $G$, denoted $\vartheta(G)$, is 
\begin{equation}\label{eq: theta function}
\vartheta(G)=\max\{\|I+T\|:~T_{i,j}=0~\text{if}~i=j\in[n]~\text{or}~\{i,j\}\in E,~I+T\succeq 0\};
\end{equation} 
the Haemers bound of $G$ (over $\C$), denoted $\cH(G)$, is 
\begin{equation}\label{eq: haemers bound}
\begin{split}
\cH(G)&=\min\{\rk(B):\ B_{i,i}=1~\forall~i\in[n],~B_{i,j}=0 \text{ if } \{i,j\}\not\in E(G)\}\\
&=\min\{\rk(B):\ B_{i,i}\neq 0~\forall~i\in[n],~B_{i,j}=0 \text{ if }\{i,j\}\not\in E(G)\};
\end{split}
\end{equation}
and finally the orthogonal rank\footnote{The name ``orthogonal rank'' (and its notation $\xi$) is normally used in the study of graph coloring, where adjacent vertices receive orthogonal vectors. In the study of independence number and Shannon capacity one usually considers $\overline \xi (G)$, the orthogonal rank of the complement of $G$, this equals the dimension of the vector space of an ``orthonormal representation'' (introduced by Lov\'asz in his original paper~\cite{lovasz1979shannon}), where \emph{nonadjacent} vertices receive orthogonal vectors. In this paper we are interested in the second setting and for brevity we will refer to $\overline \xi (G)$ as the  orthogonal rank of $G$. } of $G$ (over $\C$), denoted $\overline{\xi}(G)$, 
is 
\begin{equation}\label{eq: complement of orthogonal rank}
\overline{\xi}(G)=\min\{k:\ \exists\ket{\psi_1},\dots,\ket{\psi_n}\in\C^k,~\text{s.t.}~\langle\psi_i|\psi_i\rangle\neq 0~\text{and}~\langle\psi_i|\psi_{j}\rangle= 0~\text{if}~\{i,j\}\not\in E(G)\}. 
\end{equation}
At first glance the orthogonal rank of a graph $G$ seems unrelated to the Haemers bound, but we point out that the orthogonal rank can be viewed as a positive-semidefinite version of the Haemers bound (which has been mentioned implicitly in~\cite{peeters1996orthogonal}, and can also easily be obtained from~\cite{Hogben2017}).
\begin{observation}\label{obs: psd generalization}
For a graph $G=([n],E)$ we have
$$\overline{\xi}(G)=\min\{\rk(B):\ B_{i,i}=1~\forall~i\in[n],~B_{i,j}=0 \text{ if  }\{i,j\}\not\in E(G),~B\succeq 0\}.$$
\end{observation}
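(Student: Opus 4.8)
The plan is to prove the two inequalities separately; both are instances of the standard bijection between positive semidefinite matrices and Gram matrices of vectors. Write $r(G)$ for the right-hand side of the claimed identity. The minimum defining $r(G)$ is attained since ranks are non-negative integers, and the identity matrix $I_n$ is always feasible.

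First I would show $r(G)\le\overline\xi(G)$. Take an optimal family $\ket{\psi_1},\dots,\ket{\psi_n}\in\C^k$ for \eqref{eq: complement of orthogonal rank}, so $k=\overline\xi(G)$, $\langle\psi_i|\psi_i\rangle\neq 0$ for all $i$, and $\langle\psi_i|\psi_j\rangle=0$ whenever $\{i,j\}\notin E(G)$. Since each $\langle\psi_i|\psi_i\rangle$ is a strictly positive real, I may rescale $\ket{\psi_i}\mapsto\ket{\psi_i}/\sqrt{\langle\psi_i|\psi_i\rangle}$ to arrange $\langle\psi_i|\psi_i\rangle=1$ without affecting the orthogonality pattern. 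Let $B\in M_n$ be the Gram matrix $B_{i,j}=\langle\psi_i|\psi_j\rangle$. Then $B\succeq 0$, $B_{i,i}=1$, $B_{i,j}=0$ for $\{i,j\}\notin E(G)$, and $\rk(B)\le k$ because all the $\ket{\psi_i}$ lie in the $k$-dimensional space $\C^k$. Hence $r(G)\le k=\overline\xi(G)$.

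Conversely, I would show $\overline\xi(G)\le r(G)$. Let $B\succeq 0$ attain $r(G)$, with $B_{i,i}=1$, $B_{i,j}=0$ for $\{i,j\}\notin E(G)$, and $r\coloneqq\rk(B)$. Using the spectral decomposition $B=\sum_{t=1}^r\lambda_t\Proj{u_t}$ with $\lambda_t>0$, set $C\in M_{r\times n}$ to be the matrix whose $(t,i)$ entry is $\sqrt{\lambda_t}\,\langle u_t|i\rangle$, so that $B=C^\dagger C$. Letting $\ket{\psi_i}\in\C^r$ be the $i$-th column of $C$, we get $\langle\psi_i|\psi_j\rangle=B_{i,j}$; in particular $\langle\psi_i|\psi_i\rangle=1\neq0$ and $\langle\psi_i|\psi_j\rangle=0$ whenever $\{i,j\}\notin E(G)$. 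This is a feasible family for \eqref{eq: complement of orthogonal rank} in dimension $r$, so $\overline\xi(G)\le r=r(G)$. Combining the two inequalities yields the claimed equality.

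There is essentially no hard step here: the only points worth a line of justification are that the rescaling in the first direction is legitimate (the diagonal of a Gram matrix is real and, under the hypothesis $\langle\psi_i|\psi_i\rangle\neq0$, strictly positive), and that a positive semidefinite matrix of rank $r$ is exactly the Gram matrix of $n$ vectors in $\C^r$, which is what the factorization $B=C^\dagger C$ with $C\in M_{r\times n}$ records. Both are standard, so the whole argument is short.
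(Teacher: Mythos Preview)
Your proof is correct and follows exactly the same idea as the paper's own proof: both rely on the standard bijection between positive semidefinite matrices of rank at most $k$ and Gram matrices of vectors in $\C^k$. The paper simply compresses the argument into a single sentence, whereas you spell out the two inequalities, the rescaling to unit diagonal, and the factorization $B=C^\dagger C$ explicitly.
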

\begin{proof}
Indeed, the matrix $B = \mathrm{Gram}(\ket{\psi_1},\ldots,\ket{\psi_n}) =  \big[ \langle \psi_i | \psi_j\rangle \big]_{i,j \in [n]}$ is feasible and has rank at most $k$ if and only if the vectors $\ket{\psi_1},\ldots,\ket{\psi_n}$ can be taken in $\C^k$ and satisfy the orthogonality conditions of Equation~\eqref{eq: complement of orthogonal rank}.
\end{proof}

For any graph $G$, we have the following inequalities:
\begin{equation}\label{eq: graph inequality}
\alpha(G)\leq\Theta(G)\leq \vartheta(G)\leq\overline{\xi}(G),~~\alpha(G)\leq\Theta(G)\leq \cH(G)\leq\overline{\xi}(G).
\end{equation}
The Lov\'asz theta function and the Haemers bound are \emph{incomparable}: We have $\vartheta(C_5)=\sqrt{5}<3\leq \cH(C_5)$ and $\cH(G)\leq 7<9=\vartheta(G)$ when $G$ is taken as the complement of the \emph{Shl\"afli graph}~\cite{Haemers1978}.

\paragraph{The noncommutative graph setting.} 

Previously, work has been done on constructing noncommutative analogues of the Lov\'asz theta number~\cite{duan2013} and the orthogonal rank~\cite{stahlke2016,8355678}. 
Our goal is to provide a noncommutative analogue of Haemers bound and therefore we will go over the (very much related) noncommutative analogue of the orthogonal rank in more detail below.

In~\cite{stahlke2016}, the orthogonal rank of a {noncommutative} graph $S$, denoted $\overline{\xi}(S)$, is defined as
\begin{equation}\label{eq: quantum complement of orthogonal rank}
\overline{\xi}(S)=
\min\{k:\ \exists\ \text{ quantum channel } \Phi: M_n\to M_k,~\text{s.t.}~S_\Phi\subseteq S\}.
\end{equation}
The following proposition shows that $\overline \xi (S)$ is a proper noncommutative analogue of the orthogonal rank of graphs: 
\begin{proposition}[{\cite[Theorem~12]{stahlke2016}}]
Let $G=([n],E)$ be a graph and let $S_G$ be as in Equation~\eqref{eq: noncommutative graph of graph}. Then $\overline{\xi}(S_G)=\overline{\xi}(G)$.
\end{proposition}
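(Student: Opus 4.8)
The plan is to prove the two inequalities $\overline{\xi}(S_G) \le \overline{\xi}(G)$ and $\overline{\xi}(S_G) \ge \overline{\xi}(G)$ separately, each by an explicit translation between an orthogonal representation of $G$ (as in Equation~\eqref{eq: complement of orthogonal rank}) and a quantum channel $\Phi \colon M_n \to M_k$ with $S_\Phi \subseteq S_G$. For the bound $\overline{\xi}(S_G) \le \overline{\xi}(G)$, I would start from vectors $\ket{\psi_1},\dots,\ket{\psi_n} \in \C^k$ with $\langle\psi_i|\psi_i\rangle \ne 0$ and $\langle\psi_i|\psi_j\rangle = 0$ whenever $\{i,j\}\notin E$, where $k = \overline{\xi}(G)$; rescaling, I may assume each $\ket{\psi_i}$ is a unit vector without disturbing the orthogonality relations. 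I would then define the Choi-Kraus operators $E_i \coloneqq \ketbra{\psi_i}{i} \in M_{k\times n}$ for $i\in[n]$ and set $\Phi(A) \coloneqq \sum_{i=1}^n E_i A E_i^\dagger$. Since $\sum_i E_i^\dagger E_i = \sum_i \langle\psi_i|\psi_i\rangle \proj{i} = I_n$, the map $\Phi$ is a genuine quantum channel, and its noncommutative graph is $S_\Phi = \linspan\{E_i^\dagger E_j : i,j\in[n]\} = \linspan\{\langle\psi_i|\psi_j\rangle\,\ketbra{i}{j} : i,j\in[n]\}$. The diagonal terms $\proj{i}$ always lie in $S_G$, and for $i\ne j$ the term $\langle\psi_i|\psi_j\rangle\,\ketbra{i}{j}$ is nonzero only when $\{i,j\}\in E$, hence also lies in $S_G$. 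Thus $S_\Phi \subseteq S_G$ and $\overline{\xi}(S_G) \le k$.

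For the reverse inequality $\overline{\xi}(S_G) \ge \overline{\xi}(G)$, let $k = \overline{\xi}(S_G)$ and fix a quantum channel $\Phi\colon M_n\to M_k$ with Choi-Kraus representation $\Phi(A)=\sum_{l=1}^m E_l A E_l^\dagger$, $\sum_l E_l^\dagger E_l = I_n$, and $S_\Phi \subseteq S_G$. I would introduce the subspaces $V_i \coloneqq \linspan\{E_l\ket{i} : l\in[m]\} \subseteq \C^k$ for $i\in[n]$ and make two observations. First, $V_i \ne \{0\}$, because $\sum_l \|E_l\ket{i}\|^2 = \bra{i}\bigl(\sum_l E_l^\dagger E_l\bigr)\ket{i} = 1$. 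Second, if $i\ne j$ and $\{i,j\}\notin E$, then for every $l,l'$ the $(i,j)$-entry of $E_l^\dagger E_{l'} \in S_\Phi \subseteq S_G$ equals $(E_l\ket{i})^\dagger (E_{l'}\ket{j})$ and must therefore vanish, so $V_i \perp V_j$. Picking any nonzero $\ket{\psi_i}\in V_i$ for each $i$ then yields vectors in $\C^k$ with $\langle\psi_i|\psi_i\rangle\ne 0$ and $\langle\psi_i|\psi_j\rangle = 0$ for all non-edges, i.e.\ an orthogonal representation of $G$ of dimension $k$, whence $\overline{\xi}(G) \le k$. Combining the two inequalities gives $\overline{\xi}(S_G) = \overline{\xi}(G)$.

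I do not expect a serious obstacle here; the whole content is in choosing the right dictionary, and the remaining checks are routine. The only points needing a little care are the trace condition $\sum_i E_i^\dagger E_i = I_n$ in the first direction (which is exactly why one normalizes the $\ket{\psi_i}$ to unit vectors), and, in the second direction, the observation that the constraint $S_\Phi \subseteq S_G$ read entrywise forces the column images $E_l\ket{i}$ to span pairwise orthogonal subspaces across non-edges, so that collapsing each $V_i$ to a single nonzero representative recovers a classical orthogonal representation.
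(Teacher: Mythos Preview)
Your proposal is correct and follows essentially the same route as the paper: in both directions you and the paper use the Choi--Kraus operators $E_i = \ketbra{\psi_i}{i}$ to go from an orthogonal representation to a channel, and the column vectors $E_l\ket{i}$ to go back. The only cosmetic difference is that the paper picks, for each $i$, a single index $j(i)$ with $E_{j(i)}\ket{i}\neq 0$, whereas you first record that the whole subspaces $V_i$ are pairwise orthogonal across non-edges and then choose representatives; the content is the same.
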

\begin{proof}
We first show $\overline \xi(S_G) \leq \overline \xi(G)$. Let $\ket{\psi_1},\dots,\ket{\psi_n}\in\C^k$ be a feasible solution of $\overline{\xi}(G)$ (as in Equation~\eqref{eq: complement of orthogonal rank}) and let $\Phi: M_n\to M_k$ be the quantum channel defined as $\rho \mapsto \Phi(\rho):=\sum_{i=1}^n\ket{\psi_i}\!\bra{i}\rho\ket{i}\!\bra{\psi_i}$. We observe that $S_\Phi=\{\ket{i}\!\bra{j}:\braket{\psi_i|\psi_j}\neq 0\}$. Since $\braket{\psi_i|\psi_j}=0$ whenever $\{i,j\}\not\in E$, we therefore have $S_\Phi \subseteq S_G$, which implies that $\overline{\xi}(S_G)\leq \overline{\xi}(G)$. 

On the other hand, let $\Phi: M_n\to M_k$ be a quantum channel, with Choi-Kraus operators $\{E_1,\dots, E_m\}\subseteq M_{k\times n}$, that is a feasible solution of $\overline{\xi}(S_G)$. Since $\sum_{j=1}^m E_j^\dagger E_j= I_n$, for each $i\in [n]$ there exist at least one $j(i)\in[m]$ such that $E_{j(i)}\ket{i}$ is nonzero. Let $\ket{\psi_i}=E_{j(i)}\ket{i}\in \C^k$ for $i\in[n]$. We now show that $\langle \psi_i|\psi_{i'}\rangle =0$ for $\{i,i'\}\not\in E$. Note that $S_\Phi\subseteq S_G$ implies that for $\{i,i'\}\not\in E$ we have $S_\Phi \perp \ket{i'}\!\bra{i}$. Thus
$\langle \psi_i|\psi_{i'}\rangle=\bra{i}E_{j(i)}^\dagger E_{j(i')}\ket{i'}=\Tr(E_{j(i)}^\dagger E_{j(i')}\ket{i'}\!\bra{i})=0$ for all $\{i,i'\}\not\in E$. It follows that the vectors $\ket{\psi_1},\dots,\ket{\psi_n}$ are feasible for $\overline{\xi}(G)$ and therefore we have $\overline{\xi}(G)\leq \overline{\xi}(S_G)$.
\end{proof}

In~\cite{8355678}, an alternative definition of $\overline{\xi}$ for noncommutative graphs is formulated:
\begin{proposition}[Proposition $14$ in~\cite{8355678}]\label{prop: alternative formulation}
For a noncommutative graph $S\subseteq M_n$ we have
\begin{equation}\label{eq: rank formulation}
\overline{\xi}(S)=\min\{\rk(B):\ m\in\N,~B\in M_m(S)^+,~\sum_{i=1}^m B_{i,i}=I_n\}.
\end{equation}
\end{proposition}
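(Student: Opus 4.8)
The plan is to prove the two inequalities separately, exploiting the standard dictionary between a quantum channel and the Gram matrix of its Choi–Kraus operators. The one identity that does all the work is this: if $E_1,\dots,E_m\in M_{k\times n}$ are the Choi–Kraus operators of a channel $\Phi:M_n\to M_k$, then the block matrix $B=[E_i^\dagger E_j]_{i,j\in[m]}\in M_m(M_n)\cong M_{nm}$ equals $\mathcal{E}^\dagger\mathcal{E}$ for the single matrix $\mathcal{E}=[\,E_1\mid E_2\mid\cdots\mid E_m\,]\in M_{k\times nm}$ obtained by placing the $E_i$ side by side. This ties together positive semidefiniteness, the trace-preservation constraint, membership of the blocks in $S$, and the rank, all at once.

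For the inequality $\mathrm{RHS}\le\overline{\xi}(S)$, I would take a channel $\Phi:M_n\to M_k$ with $k=\overline{\xi}(S)$ and $S_\Phi\subseteq S$, fix a Choi–Kraus representation $\Phi(A)=\sum_{i=1}^m E_iAE_i^\dagger$, and set $B=[E_i^\dagger E_j]_{i,j}$. Then $B\succeq 0$ because $B=\mathcal{E}^\dagger\mathcal{E}$; each block $B_{i,j}=E_i^\dagger E_j$ lies in $S_\Phi\subseteq S$, so $B\in M_m(S)^+$; trace preservation $\sum_i E_i^\dagger E_i=I_n$ reads $\sum_i B_{i,i}=I_n$; and $\rk(B)=\rk(\mathcal{E})\le k$. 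Hence $B$ is feasible for the right-hand side with rank at most $k=\overline{\xi}(S)$, giving the bound.

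For the reverse inequality, I would start from a feasible $B\in M_m(S)^+$ with $\sum_i B_{i,i}=I_n$ and put $k=\rk(B)$. Since $B\succeq 0$ has rank $k$, factor $B=\mathcal{E}^\dagger\mathcal{E}$ with $\mathcal{E}\in M_{k\times nm}$ (the rank factorization recalled in Section~\ref{sec: prelimiary}, taken with the smallest inner dimension), and partition $\mathcal{E}$ into $m$ consecutive blocks of $n$ columns, $\mathcal{E}=[\,E_1\mid\cdots\mid E_m\,]$ with $E_i\in M_{k\times n}$. Then $B_{i,j}=E_i^\dagger E_j$, so $\sum_i E_i^\dagger E_i=\sum_i B_{i,i}=I_n$, meaning $E_1,\dots,E_m$ are Choi–Kraus operators of a quantum channel $\Phi:M_n\to M_k$, $\Phi(A)=\sum_i E_iAE_i^\dagger$. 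Moreover $S_\Phi=\linspan\{E_i^\dagger E_j:i,j\in[m]\}=\linspan\{B_{i,j}:i,j\in[m]\}\subseteq S$ because every block $B_{i,j}\in S$. Therefore $\overline{\xi}(S)\le k=\rk(B)$, and minimizing over all feasible $B$ yields the claim.

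I do not expect a serious obstacle here; the statement is essentially this translation. The only points needing care are bookkeeping of dimensions: in the second direction the output dimension of the constructed channel is exactly $\rk(B)$ (not $n$ or $nm$), which is why the factorization must be taken with minimal inner dimension; and in the first direction one should only claim $\rk(B)\le k$, since $\mathcal{E}$ need not have full row rank. It is also worth noting at the outset that the right-hand feasible set is nonempty — e.g.\ $m=1$, $B=I_n$ works since $I_n\in S$ — so the minimum is well defined, consistent with $\overline{\xi}(S)\le n<\infty$.
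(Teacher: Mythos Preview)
Your proposal is correct and follows essentially the same approach as the paper's proof: both directions proceed via the identity $B=[E_i^\dagger E_j]_{i,j}=\mathcal{E}^\dagger\mathcal{E}$ for $\mathcal{E}=[E_1\mid\cdots\mid E_m]$, translating feasibility of the channel formulation into feasibility of the block-matrix formulation and vice versa. Your added remarks about taking the factorization with minimal inner dimension and about nonemptiness of the feasible set are fine clarifications, but the core argument is the same as the paper's.
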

\begin{proof}
Let $k$ and $\Phi: M_n\to M_k$ be a feasible solution of \eqref{eq: quantum complement of orthogonal rank}, where $\Phi: M_n\to M_k$ has Choi-Kraus operators $\{E_i:i\in[m]\}\subseteq M_{k\times n}$ satisfying $S_\Phi\subseteq S$. Define $B=\big[E_i^\dagger E_j\big]_{i,j\in[m]}\in M_m(S)$. Then $B$ is a feasible solution of \eqref{eq: rank formulation} since $\sum_{i=1}^m B_{i,i}=\sum_{i=1}^m E_i^\dagger E_i= I_n$ and $B$ can be written as $B=C^\dagger C$, where $C=\begin{bmatrix} E_1& \cdots &E_m\end{bmatrix}\in M_{k\times mn}$. Moreover, since $B=C^\dagger C$, we have $\rk(B)\leq k$.

On the other hand, let $B\in M_m(S)^+$ be a feasible solution of \eqref{eq: rank formulation} with $\rk(B)=k$. Let $B=C^\dagger C$ for $C=[C_1 \cdots C_m]$, where $C_i\in M_{k\times n}$. Define $\Phi: M_n\to M_k$ by $\Phi(A)=\sum_{i=1}^m C_iA C_i^\dagger$ for any $A\in M_n$. The map $\Phi: M_n\to M_k$ is a quantum channel since $\sum_{i=1}^m C_i^\dagger C_i= \sum_{i=1}^m B_{i,i} = I_n$. Finally, we have that $S_\Phi\subseteq S$ since $C_i^\dagger C_j = B_{i,j} \in S$ for $i,j\in[m]$. Thus $\Phi$ and $k$ form a feasible solution of~\eqref{eq: complement of orthogonal rank}.
\end{proof}

\noindent{\textbf{Remark:}~we have $\Theta(S) \leq \overline \xi (S)$ for all non-commutative graphs~\cite{stahlke2016,8355678}. 
Surprisingly, in~\cite{8355678} it was shown that there exist noncommutative graphs $S$ for which the Duan-Severini-Winter noncommutative analogue of $\vartheta$ (see~\cite{duan2013}) is strictly larger than $\overline \xi$. Recently, another noncommutative analogue of $\vartheta$ has been proposed in~\cite{boreland2019sandwich} from a geometric perspective, and it was shown that it always lies between $\Theta(S)$ and $\overline \xi(S)$.}

\section{Haemers bound for noncommutative graphs}
\subsection{Definition and consistency}
By comparing  the orthogonal rank (see Observation~\ref{obs: psd generalization}) to the definition of the Haemers bound for graphs (see Equation~\eqref{eq: haemers bound}), we see that both can be viewed as finding the smallest-rank matrices in a feasible region that is very similar: the only difference is that for $\overline{\xi}(G)$ the feasible matrices are additionally required to be positive semidefinite. 

Motivated by Proposition~\ref{prop: alternative formulation}, which gives a formulation of $\overline{\xi}$ for noncommutative graphs, we therefore define the Haemers bound for noncommutative graphs by dropping the positivity requirement on the feasible region: 
\begin{definition}[Haemers bound for noncommutative graphs]\label{def: quantum haemers bound}
Let $S\subseteq M_n$ be a noncommutative graph. The Haemers bound of $S$ (over $\C$) is defined as
\begin{equation}\label{eq: quantum haemers bound}
\cH(S)=\min\{\rk(B):\ {m\in\N},~B\in M_m(S),~\sum_{i=1}^m B_{i,i}=I_n\}.
\end{equation}
\end{definition}

We first show that this bound is a proper noncommutative analogue of Haemers bound.
\begin{proposition}
Let $G=([n],E)$ be a graph and let $S_G$ be defined as in \eqref{eq: noncommutative graph of graph}. Then $\cH(S_G)=\cH(G)$.
\end{proposition}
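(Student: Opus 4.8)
The plan is to prove the two inequalities $\cH(S_G) \le \cH(G)$ and $\cH(S_G) \ge \cH(G)$ separately, mirroring the structure of the proof of Proposition~\ref{prop: alternative formulation} but without the positivity bookkeeping.

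\textbf{The easy direction: $\cH(S_G) \le \cH(G)$.} Start from an optimal matrix $B \in M_n$ for $\cH(G)$ as in Equation~\eqref{eq: haemers bound}: $B_{i,i} = 1$ for all $i$, $B_{i,j} = 0$ whenever $\{i,j\} \notin E$, and $\rk(B) = \cH(G)$. I would build a feasible solution for $\cH(S_G)$ with $m = n$ by setting $\widetilde{B} \in M_n(M_n)$ to be the block matrix with $(i,j)$-block equal to $B_{i,j}\,\ket{i}\!\bra{j}$. Each block lies in $S_G$: off-diagonal blocks are nonzero only when $\{i,j\} \in E$, and diagonal blocks are $\ket{i}\!\bra{i} \in S_G$. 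The normalization $\sum_{i=1}^n \widetilde{B}_{i,i} = \sum_i B_{i,i}\ket{i}\!\bra{i} = \sum_i \ket{i}\!\bra{i} = I_n$ holds. Finally $\widetilde{B}$, viewed as an $n^2 \times n^2$ matrix, is (up to a permutation of rows and columns that sorts indices appropriately) a submatrix-type rearrangement of $B$ — more precisely $\widetilde{B}_{(i,k),(j,\ell)} = B_{i,j}[k=i][\ell=j]$, which is the entrywise product of (the appropriate embedding of) $B$ with a 0/1 matrix, so one checks directly that $\rk(\widetilde{B}) \le \rk(B) = \cH(G)$. Hence $\cH(S_G) \le \cH(G)$.

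\textbf{The harder direction: $\cH(S_G) \ge \cH(G)$.} Take a feasible $B \in M_m(S_G)$ with $\sum_{i=1}^m B_{i,i} = I_n$ and $\rk(B) = \cH(S_G)$; I want to extract from it a classical Haemers matrix of no larger rank. Write each block $B_{i,j} = \sum_{\{p,q\}\in E \text{ or } p = q} (B_{i,j})_{p,q}\,\ket{p}\!\bra{q}$. The key observation is that from $\sum_i B_{i,i} = I_n$, for each $p \in [n]$ we have $\sum_{i=1}^m (B_{i,i})_{p,p} = 1$, so there is some index $i(p) \in [m]$ with $(B_{i(p),i(p)})_{p,p} \ne 0$. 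Now define a classical matrix $C \in M_n$ by $C_{p,q} = (B_{i(p),i(q)})_{p,q}$ — i.e.~pick out the $(p,q)$-entry of the block indexed by the pair $(i(p), i(q))$. Then $C_{p,p} = (B_{i(p),i(p)})_{p,p} \ne 0$, and if $\{p,q\} \notin E$ (with $p \ne q$) then $\ket{p}\!\bra{q} \perp S_G$, so $(B_{i(p),i(q)})_{p,q} = 0$, giving $C_{p,q} = 0$. Thus $C$ is feasible for the second formulation of $\cH(G)$ in Equation~\eqref{eq: haemers bound}. It remains to show $\rk(C) \le \rk(B)$.

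\textbf{The main obstacle — bounding $\rk(C)$.} This is the step requiring care, since $C$ is assembled from scattered entries of $B$ rather than being an honest submatrix. The clean way is to realize $C$ as obtained from $B$ by first applying a linear ``coordinate-extraction'' map and then taking an actual submatrix. Concretely, write $B = D^\dagger D$ with $D = [D_1\ \cdots\ D_m] \in M_{k \times mn}$, $k = \rk(B)$, $D_i \in M_{k \times n}$, so $B_{i,j} = D_i^\dagger D_j$ and hence $(B_{i,j})_{p,q} = \bra{p} D_i^\dagger D_j \ket{q} = (D_i\ket{p})^\dagger (D_j\ket{q})$. Now set $\ket{u_p} := D_{i(p)}\ket{p} \in \C^k$ for $p \in [n]$; then $C_{p,q} = \langle u_p | u_q\rangle$, so $C$ is a Gram matrix of $n$ vectors in $\C^k$, whence $\rk(C) \le k = \rk(B) = \cH(S_G)$. (One should double-check the index conventions — $B$ is $mn \times mn$ with rows/columns indexed by pairs $(i,p)$, and $D_i\ket{p}$ is the column of $D$ indexed by $(i,p)$ — but no real difficulty arises.) Combining the two directions yields $\cH(S_G) = \cH(G)$.
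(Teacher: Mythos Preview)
Your easy direction is fine and matches the paper's construction; the rank bound can be justified either as you do (Hadamard product with a rank-one pattern) or, more directly, by noting that all rows of $\widetilde B$ indexed by $(i,k)$ with $k\neq i$ vanish, so $\widetilde B$ is a zero-padding of $B$ and $\rk(\widetilde B)=\rk(B)$.

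The harder direction has a genuine error. You write ``$B = D^\dagger D$ with $D\in M_{k\times mn}$'', but this is a Gram decomposition and is available only when $B$ is positive semidefinite. In the definition of $\cH(S_G)$ (Equation~\eqref{eq: quantum haemers bound}) there is no positivity constraint on $B$ --- that constraint is precisely what distinguishes $\overline\xi$ from $\cH$, which is the whole point of the paper's definition. As written, your argument only proves $\overline\xi(G)\leq \overline\xi(S_G)$, i.e.\ it reproduces Proposition~\ref{prop: alternative formulation} rather than the Haemers statement. Concretely, your conclusion that $C$ is a Gram matrix would force $C\succeq 0$, which is false for generic feasible $B$.

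The fix is immediate and is what the paper does: use a general rank factorization $B = E^\dagger F$ with $E,F\in M_{k\times mn}$ (possibly different), set $\ket{u_p}=E_{i(p)}\ket{p}$ and $\ket{v_p}=F_{i(p)}\ket{p}$, and observe $C_{p,q}=\langle u_p|v_q\rangle$, so $C=U^\dagger V$ with $U,V\in M_{k\times n}$ and $\rk(C)\leq k$. Everything else in your argument (the choice of $i(p)$, the verification that $C$ is feasible for the second formulation in Equation~\eqref{eq: haemers bound}) is correct and identical to the paper's proof.
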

\begin{proof}

We first show that $\cH(S_G)\leq \cH(G)$. Let $B$ be a feasible solution of \eqref{eq: haemers bound} with $\rk(B)=k$ and $B_{i,i}=1$ for $i\in[n]$. Decompose $B$ as $B=C^\dagger D$ where $C,D\in M_{k\times n}$. Denote the (normalized) columns of $C$ and $D$ as $\{\ket{C_1},\dots,\ket{C_n}\}$ and $\{\ket{D_1},\dots,\ket{D_n}\}$ respectively. Then $B=\begin{bmatrix}\langle C_i|D_j\rangle\end{bmatrix}_{i,j\in [n]}$. Define the matrix $B'=\begin{bmatrix}\langle C_i|D_j\rangle\ket{i}\!\bra{j}\end{bmatrix}_{i,j\in [n]}$. We show that $n$ and $B'$ is a feasible solution of~\eqref{eq: quantum haemers bound}. First, note that $\langle C_i|D_j\rangle\ket{i}\!\bra{j}\in S_G$ for $i,j\in[n]$, since $\langle C_i|D_j\rangle = B_{i,j} =0$ when $\{i,j\}\not\in E$. Second,  we have $\sum_{i=1}^n B'_{i,i}=\sum_{i=1}^n \langle C_i|D_i\rangle\ket{i}\!\bra{i}=\sum_{i=1}^n \ket{i}\!\bra{i}=I_n$. To bound the rank of $B'$, note that $B'=(C')^\dagger D'$, where $C'=\begin{bmatrix}\ket{C_1}\!\bra{1}&\cdots&\ket{C_n}\!\bra{n}\end{bmatrix}$ and $D'=\begin{bmatrix}\ket{D_1}\!\bra{1}&\cdots&\ket{D_n}\!\bra{n}\end{bmatrix}$. Since $C',D'\in M_{k\times n^2}$, we have $\rk(B')\leq k$ and therefore $\cH(S_G)\leq \cH(G)$.

We then show that $\cH(G)\leq \cH(S_G)$. Let $m$ and $B$ be a feasible solution of \eqref{eq: quantum haemers bound} with $\rk(B)=k$. Write $B=C^\dagger D$ for $C,D\in M_{k\times mn}$. We group the columns of $C$ and $D$ according to the block-structure of $B$: let $C=\begin{bmatrix} C_1 & \cdots & C_m \end{bmatrix}$ and $D=\begin{bmatrix} D_1 & \cdots & D_m\end{bmatrix}$ where $C_i,D_{i'}\in M_{k\times n}$ for $i,i'\in [n]$. By the feasibility of $B$ we have that $C_i^\dagger D_{i'}\in S_G$ for all $i,i'\in[n]$ and $\sum_{j=1}^m C_j^\dagger D_j=I_n$. By the second condition, we know that for each $i\in [n]$, we can pick a $j(i)\in[m]$ such that $\bra{i}C_{j(i)}^\dagger D_{j(i)}\ket{i}\neq 0$ (if there is more than one such index, pick an arbitrary one). Let $B'=\begin{bmatrix}\bra{i}C_{j(i)}^\dagger D_{j(i')}\ket{i'}\end{bmatrix}_{i,i'\in[n]}$. We will show that $B'$ is a feasible solution to~\eqref{eq: haemers bound} (the second formulation). It is easy to see that the diagonal entries of $B'$ are nonzero, thus we only need to show that $B'_{i,i'}=\bra{i}C_{j(i)}^\dagger D_{j(i')}\ket{i'}=0$ for $\{i,i'\}\not\in E$. This follows from $C_{j(i)}^\dagger D_{j(i')}\in S_G$ and $\ket{i}\!\bra{i'}\perp S_G$ if $\{i,i'\}\not\in E$. Finally, to bound the rank of $B'$, note that $B'$ can be written as $U^\dagger V$, where $U=\begin{bmatrix} C_{j(1)}\ket{1} & \cdots & C_{j(n)}\ket{n} \end{bmatrix}\in M_{k\times n}$ and $V=\begin{bmatrix} D_{j(1)}\ket{1} & \cdots & D_{j(n)}\ket{n} \end{bmatrix}\in M_{k\times n}$. We thus have $\rk(B')\leq k$ and it follows that $\cH(G)\leq \cH(S_G)$. 
\end{proof}

\subsection{Upper bound on the Shannon capacity of noncommutative graphs}
We first show that the noncommutative analogue of the Haemers bound is submultiplicative with respect to the tensor product:
\begin{proposition}\label{prop: submultiplicative}
Let $S\subseteq M_n$ and $T\subseteq M_{n'}$ be noncommutative graphs, we have 
\[
\cH(S\otimes T)\leq\cH(S)\cH(T).
\]
\end{proposition}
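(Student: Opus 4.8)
The plan is to take optimal feasible solutions $B \in M_m(S)$ and $B' \in M_{m'}(T)$ for $\cH(S)$ and $\cH(T)$ respectively, with $\rk(B) = \cH(S)$ and $\rk(B') = \cH(T)$, and combine them into a feasible solution for $\cH(S \otimes T)$. The natural candidate is the Kronecker product $B \otimes B'$, which I would reinterpret as living in $M_{mm'}(S \otimes T)$. Concretely, if $B = [B_{i,j}]_{i,j \in [m]}$ with each $B_{i,j} \in S$ and $B' = [B'_{k,\ell}]_{k,\ell \in [m']}$ with each $B'_{k,\ell} \in T$, then I would define $\tilde B$ to be the $mm' \times mm'$ block matrix indexed by pairs $(i,k), (j,\ell)$ whose $((i,k),(j,\ell))$ block is $B_{i,j} \otimes B'_{k,\ell} \in S \otimes T \subseteq M_{nn'}$.

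First I would check the two feasibility constraints. For the diagonal condition: $\sum_{(i,k)} \tilde B_{(i,k),(i,k)} = \sum_{i,k} B_{i,i} \otimes B'_{k,k} = \left(\sum_i B_{i,i}\right) \otimes \left(\sum_k B'_{k,k}\right) = I_n \otimes I_{n'} = I_{nn'}$, using bilinearity of the tensor product and the feasibility of $B$ and $B'$. For the membership condition: each block $B_{i,j} \otimes B'_{k,\ell}$ is a product of an element of $S$ and an element of $T$, hence lies in $S \otimes T$ by definition of the tensor product of subspaces; so $\tilde B \in M_{mm'}(S \otimes T)$. Then I would bound the rank: $\tilde B$ is a suitable reindexing (a permutation of rows and columns) of the ordinary Kronecker product $B \otimes B'$ of the matrices $B \in M_{mn}$ and $B' \in M_{m'n'}$, so $\rk(\tilde B) = \rk(B \otimes B') = \rk(B)\rk(B') = \cH(S)\cH(T)$. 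Hence $\tilde B$ witnesses $\cH(S \otimes T) \leq \cH(S)\cH(T)$.

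The only genuinely delicate point is bookkeeping the index conventions: one must be careful that the "block of $M_{mm'}(M_{nn'})$ indexed by $((i,k),(j,\ell))$" really does correspond, after the canonical isomorphism $M_{mm'}(M_{nn'}) \cong M_{mm'nn'}$ and an appropriate permutation of coordinates, to the entry structure of the honest Kronecker product $B \otimes B' \in M_{mn \times mn}(M_{m'n' \times m'n'})$ — equivalently that the "shuffle" isomorphism $M_m(M_n) \otimes M_{m'}(M_{n'}) \cong M_{mm'}(M_{nn'})$ sends $B \otimes B'$ to $\tilde B$. This is the standard associativity/commutativity of Kronecker products up to a permutation matrix $P$, which conjugates $B \otimes B'$ to $\tilde B$ and therefore preserves rank. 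I expect this reindexing verification to be the main (though routine) obstacle; everything else is a one-line bilinearity computation. If one wants to avoid permutation matrices entirely, an alternative is to exhibit the rank factorization directly: writing $B = C^\dagger D$ and $B' = C'^\dagger D'$ with $C, D \in M_{k \times mn}$ and $C', D' \in M_{k' \times m'n'}$ where $k = \cH(S)$, $k' = \cH(T)$, one sets $\tilde C = C \otimes C'$ and $\tilde D = D \otimes D'$ (again suitably reindexed) so that $\tilde B = \tilde C^\dagger \tilde D$ with $\tilde C, \tilde D \in M_{kk' \times mm'nn'}$, giving $\rk(\tilde B) \leq kk' = \cH(S)\cH(T)$ immediately.
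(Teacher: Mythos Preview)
Your proposal is correct and is essentially identical to the paper's proof: both take optimal feasible solutions and form their Kronecker product, check the diagonal-sum and membership conditions, and use multiplicativity of rank. If anything, you are more careful than the paper about the shuffle isomorphism $M_m(M_n)\otimes M_{m'}(M_{n'})\cong M_{mm'}(M_{nn'})$, which the paper passes over in a single sentence.
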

\begin{proof}
Let $B_1$ and $m_1$ be a feasible solution of $\cH(S)$ and $B_2$ and $m_2$ be a feasible solution of $\cH(T)$. We construct a feasible solution of $S\otimes T$. Let $B=B_1\otimes B_2$ and $m=m_1m_2$. It is easy to see that $B_1\otimes B_2\in M_m(S\otimes T)$. Moreover,
$$\sum_{i=1}^m B_{i,i}=\sum_{i_1=1}^{m_1}\sum_{i_2=1}^{m_2}(B_1)_{i_1,i_1}\otimes (B_2)_{i_2,i_2}=I_{m_1}\otimes I_{m_2}=I_m.$$
Since $\rk(B)=\rk(B_1\otimes B_2)=\rk(B_1)\rk(B_2)$, we conclude that $\cH(S\otimes T)\leq\cH(S)\cH(T)$.
\end{proof}
\begin{remark}
The inequality in Proposition~\ref{prop: submultiplicative} can be strict; it was shown by Bukh and Cox that $\cH(C_5^{\boxtimes 2})\leq 8 < 9\leq \cH(C_5)^2$, see~\cite[Proposition~9]{bukh2018fractional}.
\end{remark}

We next show that the Haemers bound of noncommutative graphs upper bounds the independence number of noncommutative graphs and therefore also its Shannon capacity: 
\begin{theorem}\label{prop: upper bound independence number}
Let $S\subseteq M_n$ be a noncommutative graph. We have $\alpha(S)\leq\Theta(S)\leq\cH(S)\leq\overline{\xi}(S)$.
\end{theorem}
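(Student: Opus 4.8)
The plan is to establish the chain $\alpha(S)\leq\Theta(S)\leq\cH(S)\leq\overline{\xi}(S)$ by handling each inequality in turn; the first and last are essentially free, and the real content is the middle inequality $\Theta(S)\leq\cH(S)$.

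First, the inequality $\alpha(S)\leq\Theta(S)$ is immediate from the definition \eqref{eq: Shannon capacity of nc graph}: taking $k=1$ in the supremum gives $\Theta(S)\geq\alpha(S^{\otimes 1})=\alpha(S)$ (one should note in passing that $\alpha$ is supermultiplicative under $\otimes$, so the supremum is well-defined and the sequence $\alpha(S^{\otimes k})^{1/k}$ indeed converges to it by Fekete's lemma). The last inequality $\cH(S)\leq\overline{\xi}(S)$ follows by comparing Definition~\ref{def: quantum haemers bound} with the formulation of $\overline{\xi}(S)$ in Proposition~\ref{prop: alternative formulation}, Equation~\eqref{eq: rank formulation}: the feasible region defining $\cH(S)$ is obtained from that of $\overline{\xi}(S)$ by dropping the positive-semidefiniteness constraint $B\in M_m(S)^+$, so every feasible $B$ for $\overline{\xi}(S)$ is feasible for $\cH(S)$, and the minimum can only decrease.

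The main step is $\Theta(S)\leq\cH(S)$. Because $\Theta(S)=\lim_k\alpha(S^{\otimes k})^{1/k}$ and $\cH$ is submultiplicative under $\otimes$ by Proposition~\ref{prop: submultiplicative}, it suffices to prove the single-copy bound $\alpha(S)\leq\cH(S)$: then $\alpha(S^{\otimes k})\leq\cH(S^{\otimes k})\leq\cH(S)^k$, and taking $k$-th roots and the limit gives $\Theta(S)\leq\cH(S)$. So I would prove $\alpha(S)\leq\cH(S)$ directly. Let $\ket{\psi_1},\dots,\ket{\psi_\ell}\in\C^n$ be nonzero vectors witnessing $\alpha(S)=\ell$, i.e.\ $\bra{\psi_i}A\ket{\psi_j}=0$ for all $A\in S$ and all $i\neq j$; and let $B\in M_m(S)$ with $\sum_{i=1}^m B_{i,i}=I_n$ be an optimal solution for $\cH(S)$, so $\rk(B)=\cH(S)$. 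The idea is to sandwich $B$ between the independent-set vectors to produce an $\ell\times\ell$ matrix whose rank is at most $\rk(B)$ but which has full rank $\ell$. Concretely, consider the matrix $\widehat B\in M_{m\ell}$ obtained as $\widehat B = (I_m\otimes P)^\dagger\, B\, (I_m\otimes P)$ where $P=\begin{bmatrix}\ket{\psi_1}&\cdots&\ket{\psi_\ell}\end{bmatrix}\in M_{n\times\ell}$; its $(i,i')$-block is $P^\dagger B_{i,i'} P = \big[\bra{\psi_j}B_{i,i'}\ket{\psi_{j'}}\big]_{j,j'\in[\ell]}$, which is diagonal since $B_{i,i'}\in S$ and the $\ket{\psi_j}$ form an independent set. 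Summing the diagonal blocks ($i=i'$) over $i$ yields $P^\dagger\big(\sum_i B_{i,i}\big)P = P^\dagger P$, a diagonal matrix with nonzero diagonal entries $\langle\psi_j|\psi_j\rangle$. Thus from $\widehat B$ one extracts, by summing appropriate block-diagonal entries, an $\ell\times\ell$ diagonal matrix of rank $\ell$; since summing blocks and sandwiching by $P$ cannot increase rank, $\ell=\rk(P^\dagger P)\leq\rk(\widehat B)\leq\rk(B)=\cH(S)$.

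The one point that needs care is the rank bookkeeping in this last step: I want $\rk(P^\dagger P)\leq\rk(B)$, and the cleanest way is to write $B=C^\dagger D$ with $C,D\in M_{k\times mn}$ where $k=\rk(B)$, block them as $C=[C_1\ \cdots\ C_m]$, $D=[D_1\ \cdots\ D_m]$ with $C_i,D_i\in M_{k\times n}$, and observe $P^\dagger P = P^\dagger\big(\sum_i C_i^\dagger D_i\big)P = \big(\sum_i C_i P\big)^\dagger\big(\sum_i D_i P\big) = U^\dagger V$ with $U=\sum_i C_i P,\ V=\sum_i D_i P\in M_{k\times\ell}$, so $\rk(P^\dagger P)\leq k$; combined with $\rk(P^\dagger P)=\ell$ (nonzero diagonal, diagonal matrix) this gives $\ell\leq k=\cH(S)$, exactly as desired. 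This mirrors the rank argument already used in the proof that $\cH(S_G)=\cH(G)$, so no genuinely new difficulty arises — the only thing to watch is that the off-diagonal blocks of $\widehat B$ really do vanish on the relevant entries, which is precisely the defining property of the independent set.
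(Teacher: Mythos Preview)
Your overall strategy is exactly the paper's: reduce $\Theta(S)\leq\cH(S)$ to $\alpha(S)\leq\cH(S)$ via submultiplicativity, then compress an optimal $B\in M_m(S)$ by the independent-set matrix $P=[\ket{\psi_1}\cdots\ket{\psi_\ell}]$ to $\widehat B=(I_m\otimes P)^\dagger B(I_m\otimes P)$, and read off $\ell\leq\rk(\widehat B)\leq\rk(B)$. The first and last inequalities of the chain are also handled identically to the paper.

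The gap is in your rank bookkeeping. Your informal claim that ``summing blocks \dots\ cannot increase rank'' is false: summing the diagonal blocks of an $m\ell\times m\ell$ matrix is a partial trace, and partial trace can raise rank (e.g.\ a rank-one maximally entangled projector has full-rank partial trace). Your attempted fix via factorization contains an algebraic error:
\[
P^\dagger\Big(\sum_i C_i^\dagger D_i\Big)P=\sum_i (C_iP)^\dagger(D_iP)\neq \Big(\sum_i C_iP\Big)^\dagger\Big(\sum_j D_jP\Big),
\]
since the right-hand side includes all cross terms $(C_iP)^\dagger(D_jP)$ with $i\neq j$. So neither argument establishes $\rk(P^\dagger P)\leq\rk(B)$.

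The paper avoids partial trace entirely and instead exhibits an $\ell\times\ell$ full-rank \emph{submatrix} of $\widehat B$: since $\sum_j B_{j,j}=I_n$, for each $s\in[\ell]$ there is $j(s)\in[m]$ with $\bra{\psi_s}B_{j(s),j(s)}\ket{\psi_s}\neq 0$; the submatrix of $\widehat B$ on the row/column indices $\{(j(s),s):s\in[\ell]\}$ is diagonal (because each block $P^\dagger B_{i,j}P$ is diagonal) with nonzero diagonal, hence has rank $\ell$. Equivalently, since every block of $\widehat B$ is diagonal, a permutation turns $\widehat B$ into $\bigoplus_{s=1}^\ell M_s$ with $M_s=[\bra{\psi_s}B_{i,j}\ket{\psi_s}]_{i,j}$; each $M_s$ has trace $\|\psi_s\|^2\neq 0$, so $\rk(\widehat B)=\sum_s\rk(M_s)\geq\ell$. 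Either of these one-line observations closes your argument.
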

\begin{proof}
The last inequality holds since $\cH(S)$ (Equation~\eqref{eq: quantum haemers bound}) is a relaxation of $\overline{\xi}(S)$ (Equation~\eqref{eq: rank formulation}).

To see the first two inequalities, we only need to show $\alpha(S)\leq\cH(S)$ since by the previous proposition $\cH(S)$ is submultiplicative. 

Let $\alpha(S)=\ell$ and let $\ket{\psi_1},\dots,\ket{\psi_\ell}\in\C^n$ satisfy $\bra{\psi_i}A\ket{\psi_j}=0$ for all $i\neq j$ and $A\in S$. Let $B$ and $m$ be a feasible solution of $\cH(S)$. We show that $\rk(B)\geq \ell$. Let $U=\begin{bmatrix}\ket{\psi_1} & \cdots & \ket{\psi_\ell} \end{bmatrix}\in M_{n\times \ell}$ and write $B=\begin{bmatrix}B_{i,j}\end{bmatrix}_{i,j\in[m]}$ where $B_{i,j}\in S$ for all $i,j\in[m]$. Note that 
\[
\rk(B)\geq \rk((I_m\otimes U^\dagger) B(I_m\otimes U)).
\]
Let $D_{i,j}:=U^\dagger B_{i,j} U$ and note that $D_{i,j}=\diag(\bra{\psi_1}B_{i,j}\ket{\psi_1},\cdots,\bra{\psi_\ell}B_{i,j}\ket{\psi_\ell})\in M_\ell$. We set $D:=\begin{bmatrix}D_{i,j}\end{bmatrix}_{i,j\in [m]} = (I\otimes U^\dagger) B(I\otimes U) \in M_{m\ell}$. We claim that $\rk(D)\geq \ell$. To see this we use that $\sum_{j=1}^m B_{j,j}=I_n$. For any $i\in[\ell]$, there exists at least one $j(i)\in[m]$, such that $\bra{\psi_i}B_{j(i),j(i)}\ket{\psi_i}\neq 0$. Then the submatrix of $D$ consisting of the $(j(i),i)$-th row and column for all $i\in [\ell]$ is diagonal with every diagonal entry being nonzero.
We conclude that $\rk(B)\geq \rk(D)\geq \ell$.
\end{proof}

All inequalities in the above theorem can be strict. This follows from the fact that they can be strict for graphs. Similarly, we point out that the Haemers bound of noncommutative graphs is incomparable with existing noncommutative analogues of the theta function ($\vartheta$ and $\tilde{\vartheta}$ introduced in~\cite{duan2013} and recently $\theta$ and $\hat{\theta}$ in~\cite{boreland2019sandwich}). This again follows from the fact that $\vartheta$ and $\cH$ are incomparable for graphs.

\subsection{Properties of the Haemers bound and examples}

We first give an upper bound on the size of the block-matrix needed in the definition of the Haemers bound of noncommutative graphs. The result is similar to Proposition IV.7 from~\cite{8355678}, where they show that for $\overline{\xi}(S)$ we may restrict to $m \leq 2n^3$ in Equation~\eqref{eq: rank formulation}. 
\begin{proposition}\label{prop: upper bound on m}
Let $S\subseteq M_n$, the optimal solution of $\cH(S)$ can be achieved with $m\leq n^{4}$. 
\end{proposition}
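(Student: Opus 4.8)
The plan is to start from an arbitrary feasible solution $(m, B)$ of $\cH(S)$ with $B \in M_m(S)$, $\sum_{i=1}^m B_{i,i} = I_n$, and $\rk(B) = \cH(S)$, and to show that one can pass to a smaller feasible solution whenever $m$ is too large, without increasing the rank. Write $B = C^\dagger D$ with $C, D \in M_{k \times mn}$, where $k = \rk(B)$, and group the columns into blocks $C = [C_1 \cdots C_m]$, $D = [D_1 \cdots D_m]$ with $C_i, D_i \in M_{k \times n}$; then $B_{i,j} = C_i^\dagger D_j$ and the trace condition reads $\sum_{i=1}^m C_i^\dagger D_i = I_n$. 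The key observation is that the only role played by the individual blocks is (a) membership $C_i^\dagger D_j \in S$ for all $i,j$, and (b) the single linear constraint $\sum_i C_i^\dagger D_i = I_n$ in the $n^2$-dimensional space $M_n$ (equivalently, a $2n^2$-dimensional real constraint, or $n^2$ complex constraints).

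The main step is a Carathéodory-type pruning argument on the index set $[m]$. Consider the $m$ matrices $X_i := C_i^\dagger D_i \in M_n$ for $i \in [m]$; they satisfy $\sum_i X_i = I_n$. I would like to find a subset $J \subseteq [m]$ with $|J| \le n^2$ (or some such bound) together with coefficients so that a rescaled subfamily still sums to $I_n$, while preserving feasibility. Concretely: the affine span argument says that among $X_1, \dots, X_m \in M_n \cong \C^{n^2}$, whose sum is $I_n$, there is a sub-multiset of size at most $n^2 + 1$ whose suitable nonnegative combination equals $I_n$ — but we need to be careful, because we are not taking a convex combination of the $X_i$ but their plain sum, and we must be able to realize the rescaling by absorbing scalars into the $C_i$ or $D_i$ (which we can, since multiplying $D_i$ by $\lambda_i \ge 0$ keeps $C_i^\dagger (\lambda_i D_i) = \lambda_i X_i \in S$ and does not increase the rank of the new $B$). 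So the right formulation is: the point $I_n$ lies in the conic hull of $\{X_1, \dots, X_m\}$ in the real vector space of Hermitian-plus-arbitrary matrices $M_n$ (real dimension $2n^2$), hence by Carathéodory for cones it lies in the conic hull of at most $2n^2$ of them; restrict to those indices and rescale. This yields $m \le 2n^2$, which is even better than $n^4$, so I expect the actual argument in the paper to be cruder — perhaps they keep both $C_i$ and $D_i$ rigid and only prune indices where the whole block can be dropped, or they argue on a different object. Let me instead aim for the stated $n^4$ via a weaker but safer route.

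The safer route: think of $B$ itself. We have $B \in M_m(S)$, so $B$ lives in an $m^2 n^2$-dimensional space, but its rank is $k \le \overline\xi(S) \le n^2$ (since $\cH(S) \le \overline\xi(S)$, and $\overline\xi(S) \le$ something like $n^2$ — actually $\overline\xi(S_G) = \overline\xi(G) \le n$ always, and in general $\overline\xi(S)\le n^2$ because one can always take the channel into $M_{n^2}$, e.g.\ the completely depolarizing-type construction; more simply $\cH(S)\le n^2$ since $B$ with a single block $B = I_n \in M_1(S)$, wait $I_n \in S$, has rank $n$, so in fact $\cH(S) \le n$). Hence $k \le n$. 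Now a rank-$k$ matrix $B \in M_m(S)$ with $\sum_i B_{i,i} = I_n$: I would argue that if $m > $ (some polynomial in $n$), two of the "block-rows" of the factor $C$ are linearly dependent, or more precisely that the column space of $C$ restricted to block $i$, combined with the constraint, forces redundancy; by a dimension count on the at most $k \le n$ dimensional column space versus the $m$ blocks each contributing to a degree of freedom, and using the single $I_n$ constraint to fix one block, one gets $m \le k^2 \cdot$ (const) or $m \le n \cdot k^2 \le n^3$ or $n^4$ depending on how wastefully one counts. I expect the main obstacle to be exactly this bookkeeping: making the pruning step actually remove an index while simultaneously (i) keeping every block product in $S$, (ii) maintaining the trace normalization $\sum B_{i,i} = I_n$ on the nose (this is the delicate one — dropping a block destroys the sum, so one genuinely needs the rescaling/conic-combination trick rather than naive deletion), and (iii) not increasing the rank. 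So the plan is: (1) reduce to the factored form with blocks $C_i, D_i$; (2) set $X_i = C_i^\dagger D_i$, note $\sum X_i = I_n$ and each $X_i$, and more importantly the associated rank-one-in-the-block-index data, live in a space of dimension $O(n \cdot k) = O(n^2)$; (3) apply Carathéodory in the appropriate cone to select $O(n^2)$ or $O(n^3)$ indices, absorbing the Carathéodory coefficients into one of the two factors so that $S$-membership and the rank bound survive; (4) the rescaled sum still equals $I_n$; (5) read off $m \le n^4$ (with room to spare). I would present it with whatever counting is cleanest even if it gives a bound better than $n^4$, since the statement only asks for $n^4$.

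Here is the concrete skeleton I would write up.

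\medskip

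\noindent\textbf{Proof sketch.} Let $(m,B)$ be an optimal solution of $\cH(S)$, so $B = [B_{i,j}]_{i,j\in[m]}$ with $B_{i,j} \in S$, $\sum_{i=1}^m B_{i,i} = I_n$, and $r := \rk(B) = \cH(S)$. Since $I_n \in S$, the single block $B = I_n$ shows $\cH(S) \le n$, hence $r \le n$. Factor $B = C^\dagger D$ with $C, D \in M_{r \times mn}$ and split $C = [C_1 \mid \cdots \mid C_m]$, $D = [D_1 \mid \cdots \mid D_m]$ into blocks $C_i, D_i \in M_{r \times n}$, so that $B_{i,j} = C_i^\dagger D_j$ and $\sum_{i=1}^m C_i^\dagger D_i = I_n$.

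Regard each $C_i$ as an element of the real vector space $V := M_{r\times n}$, whose real dimension is $2rn \le 2n^2$. The point $I_n$ lies in the set $\{\sum_{i\in J} \lambda_i\, C_i^\dagger D_i : J \subseteq [m],\ \lambda_i \ge 0\}$; more to the point, consider the linear map $L : \R^m \to M_n$, $L(\lambda) = \sum_i \lambda_i\, C_i^\dagger D_i$. Its image contains $I_n$ (take $\lambda = \mathbf{1}$), and its image is a subspace of $M_n$ of real dimension at most $2n^2$; but that alone does not bound $m$. Instead run Carathéodory on the cone generated by $\{C_i^\dagger D_i : i \in [m]\}$ inside $M_n$ (real dimension $2n^2$): since $I_n = \sum_i 1\cdot C_i^\dagger D_i$ lies in this cone, there is $J \subseteq [m]$ with $|J| \le 2n^2$ and scalars $\mu_i \ge 0$ ($i \in J$) with $\sum_{i\in J} \mu_i\, C_i^\dagger D_i = I_n$. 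Now define a new solution on index set $J$: keep $C_i$ for $i\in J$ and replace $D_i$ by $\mu_i D_i$. Then the new block products are $C_i^\dagger(\mu_j D_j) = \mu_j B_{i,j} \in S$ (as $S$ is a subspace), the new trace condition holds by construction, and the new block matrix is $(C')^\dagger D''$ with $C' = [C_i]_{i\in J}$, $D'' = [\mu_i D_i]_{i\in J}$, so its rank is at most $r$. Hence $(|J|, B')$ is feasible with $|J| \le 2n^2 \le n^4$ and $\rk(B') \le r = \cH(S)$; by optimality it achieves the minimum. $\hfill\square$

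\medskip

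\noindent\emph{(Remark on the bound.)} The above in fact gives $m \le 2n^2$; we state the weaker $m \le n^4$ since that suffices for later use and since the cruder Carathéodory count over the full $m^2n^2$-dimensional ambient space of $B$, if one prefers to avoid rescaling, still lands comfortably below $n^4$. The only subtle point is step (3)–(4): one must prune via a \emph{conic combination} rather than by deleting blocks outright, because deletion breaks $\sum_i B_{i,i} = I_n$; absorbing the Carathéodory coefficients into the $D$-factor repairs the normalization while preserving both $S$-membership and the rank bound.
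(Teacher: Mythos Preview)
Your argument is correct and in fact yields a sharper bound than the paper's. Both proofs start identically: factor an optimal $B = C^\dagger D$ into blocks $C_i, D_i \in M_{k\times n}$ with $k = \cH(S) \leq n$ and $\sum_i C_i^\dagger D_i = I_n$, and then prune the index set while absorbing scalars into one of the factors so that $S$-membership, the identity constraint, and the rank bound all survive. The difference is in \emph{how} the pruning is done.

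The paper prunes bilinearly: it selects subsets $I,J\subseteq[m]$ with $|I|,|J|\le kn$ such that $\{C_i\}_{i\in I}$ and $\{D_j\}_{j\in J}$ span all of the $C_i$'s and $D_j$'s respectively, rewrites $I_n=\sum_{i\in I,\,j\in J}\alpha_{i,j}C_i^\dagger D_j$, and then assembles a new block matrix indexed by $I\times J$ (with each $C_i$ repeated $|J|$ times and each $D_j$ scaled by $\alpha_{i,j}$). This yields $m\le (kn)^2\le n^4$.

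You instead prune diagonally: a single Carath\'eodory step on the terms $X_i=C_i^\dagger D_i$ inside the $2n^2$-dimensional real space $M_n$ gives $J$ with $|J|\le 2n^2$ and $\mu_i\ge 0$ with $\sum_{i\in J}\mu_i X_i=I_n$; rescaling $D_i\mapsto\mu_i D_i$ finishes. Your route is shorter, avoids the block-duplication construction, and gives the stronger bound $m\le 2n^2$. (Nonnegativity of the $\mu_i$ is irrelevant here, since $S$ is a complex subspace; using plain $\C$-linear dependence in $M_n$ instead of conic Carath\'eodory would even give $m\le n^2$.) Both approaches are valid; yours is simply more economical.
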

\begin{proof}
First, let us note that $\cH(S) \leq \overline{\xi}(S) \leq n$ since $B = I_n \in M_1(S)$ is a feasible solution of rank~$n$. 

Let $m \in \N$ and $B \in M_m(S)$ be feasible for $\cH(S)$ with $\rk(B)=k \leq n$. Let us write $B = C^\dagger D$ with $C,D \in M_{k \times mn}$. Say $C = \begin{bmatrix}C_1 & \cdots & C_m\end{bmatrix}$ and $D = \begin{bmatrix}D_1 & \cdots & D_m\end{bmatrix}$ where $C_i,D_i \in M_{k\times n}$. Then, feasibility of $B$ implies that $\sum_{i=1}^m C_i^\dagger D_i = I_n$. The crucial observation is now that there can be at most $kn$ linearly independent matrices $C_i$ (likewise for the matrices $D_i$). It follows that $I_n = \sum_{i=1}^m C_i^\dagger D_i = \sum_{i \in I, j \in J}\alpha_{i,j} C_i^\dagger D_j$ for some index sets $I,J \subseteq [m]$ with $|I|,|J|\leq kn$ and coefficients $\alpha_{i,j} \in \C$ (for $i \in I, j \in J$). 
We will now construct a matrix $B' \in M_{(kn)^2}(S)$ which is feasible for $\cH(S)$ and has rank at most $k$. 
For $i \in I$ set 
\begin{align*}
\bar C_{i} &:= \big[\underbrace{C_{i} \ \cdots \ C_{i}}_{|J| \text{ times}}\big] \in M_{k \times |J|n}, \qquad 
\bar D_i &:= \begin{bmatrix}(\alpha_{i,j_1} D_{j_1})\ (\alpha_{i,j_2} D_{j_2}) \ \cdots \ (\alpha_{i,j_{|J|}} D_{j_{|J|}})\end{bmatrix} \in M_{k \times |J|n},
\end{align*}
where $j_1,\ldots, j_{|J|}$ are the elements of $J$ (for later use, let similarly $I= \{i_1,\ldots, i_{|I|}\}$). Next define 
\begin{align*}
C' &= \big[\bar C_{i_1} \ \bar C_{i_2} \ \ \cdots \ \ \bar C_{i_{|I|}} \big] \in M_{k \times |I||J|n} \\
D' &= \big[\bar D_{i_1} \ \bar D_{i_2} \ \ \cdots \ \ \bar D_{i_{|I|}} \big] \in M_{k \times |I||J|n}.
\end{align*}
Finally, set $B' = (C')^\dagger D'$ and observe that $B' \in M_{|I||J|}(S)$, that the sum of the diagonal blocks of $B'$ is $\sum_{i \in I, j \in J}\alpha_{i,j} C_i^\dagger D_j = I_n$, and that $\rk(B')\leq k$. 

To conclude the proof it suffices to note that $|I|,|J| \leq kn$ and therefore we may restrict our attention to $m \leq (kn)^2 \leq n^4$ in the definition of $\cH(S)$ (see Equation~\eqref{eq: quantum haemers bound}). 
\end{proof}
The above proposition implies that the Haemers bound is computable.
\begin{corollary}\label{cor: computability}
The Haemers bound of noncommutative graphs is computable.
\end{corollary}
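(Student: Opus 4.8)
The plan is to reduce the computation of $\cH(S)$ to finitely many polynomial feasibility questions over $\C$, each of which is decidable by classical elimination theory. Throughout I assume the noncommutative graph $S\subseteq M_n$ is presented concretely, namely by a basis $A_1,\dots,A_d$ of $S$ whose entries lie in a computable subfield of $\C$ (for instance $\Q$, or the field of algebraic numbers), so that all polynomials produced below have computable coefficients.

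First I would combine Proposition~\ref{prop: upper bound on m} with the bound $\cH(S)\le n$ (established in its proof via the feasible solution $B=I_n\in M_1(S)$) to rewrite
\[
\cH(S)=\min\bigl\{k\in[n]:\ \exists\,m\le n^4,\ \exists\,B\in M_m(S)\ \text{with}\ \sum_{i=1}^m B_{i,i}=I_n\ \text{and}\ \rk(B)\le k\bigr\}.
\]
Hence it suffices to decide, for each fixed pair $(m,k)$ with $1\le m\le n^4$ and $1\le k\le n$, the predicate $P(m,k)$: ``there exists $B\in M_m(S)$ with $\sum_{i=1}^m B_{i,i}=I_n$ and $\rk(B)\le k$.''

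Next I would make $P(m,k)$ explicit as a system of polynomial equations. A matrix $B\in M_m(S)$ is parametrized by the $m^2d$ complex unknowns $x_{i,j,\ell}$ via $B_{i,j}=\sum_{\ell=1}^{d}x_{i,j,\ell}A_\ell$. In these coordinates the trace-normalization $\sum_{i=1}^m B_{i,i}=I_n$ is a system of \emph{linear} equations with coefficients read off from the $A_\ell$, and the rank bound $\rk(B)\le k$ is equivalent to the vanishing of all $(k+1)\times(k+1)$ minors of $B$, a system of polynomial equations of degree $k+1$ in the $x_{i,j,\ell}$. Therefore $P(m,k)$ holds if and only if the affine variety these polynomials cut out in $\C^{m^2d}$ is nonempty, and emptiness of such a variety (over $\C$, with computable coefficients) is decidable: for instance by a Gr\"obner basis computation testing whether $1$ belongs to the generated ideal (Hilbert's Nullstellensatz), or by quantifier elimination for algebraically closed fields, or, after separating real and imaginary parts, by Tarski's decision procedure for real-closed fields.

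The resulting algorithm loops over $k=1,\dots,n$ and, for each $k$, over $m=1,\dots,n^4$, deciding $P(m,k)$, and outputs the smallest $k$ for which some $P(m,k)$ holds; this is guaranteed to occur by $k=n$, so the procedure terminates and returns $\cH(S)$. I do not anticipate a serious obstacle here. The only delicate points are the harmless one of fixing an input encoding for $S$ (a basis over a computable field) so that the polynomial systems are effectively computable, and the observation that ``$\rk(B)\le k$'' is genuinely a closed polynomial condition, which is exactly what lets $P(m,k)$ be phrased as an emptiness-of-variety question. Once Proposition~\ref{prop: upper bound on m} supplies the uniform bound $m\le n^4$, the remainder is a routine appeal to the decidability of polynomial feasibility over $\C$.
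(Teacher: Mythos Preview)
Your proposal is correct and follows essentially the same route as the paper: bound $m$ via Proposition~\ref{prop: upper bound on m}, use $\cH(S)\le n$ to make the search finite, encode ``$\rk(B)\le k$'' by the vanishing of the $(k+1)\times(k+1)$ minors, and decide each resulting polynomial system via Hilbert's Nullstellensatz and Gr\"obner bases. The only cosmetic difference is that the paper fixes $m=n^4$ once (padding smaller feasible $B$ with zero blocks) rather than looping over all $m\le n^4$, and it does not spell out the computable-subfield caveat for the input that you (rightly) make explicit.
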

\begin{proof}
Let $S \subseteq M_n$ be a noncommutative graph. Proposition~\ref{prop: upper bound on m} tells us that in Definition~\ref{def: quantum haemers bound} we may restrict our attention to $m \leq n^{4}$, i.e., matrices $B$ of size polynomial in $n$. In the proof of the previous proposition we have seen that $\cH(S) \leq n$, therefore it is an integer between $1$ and $n$ and we may compute it by solving several feasibility problems. Each feasibility problem asks whether there exists a matrix $B \in M_{n^4}(S)$ whose diagonal blocks sum up to the identity matrix has rank at most $k$ for some $k \in [n]$. Such a problem can be viewed as asking whether or not a system of polynomial equations has a common root (in $\C$): The condition that $\rk(B) \leq k$ is equivalent to the condition that all the $(k+1)\times (k+1)$-minors of $B$ are equal to zero. 

\emph{Hilbert's Nullstellensatz} implies that a system of polynomials has a common root if and only if $1$ does not belong to the ideal generated by those polynomials. The latter can be tested using \emph{Gr\"obner bases}. We refer to, for instance,~\cite{CLO15} for more details. 
\end{proof}

We now give a formulation of $\cH(S)$ that is similar to Stahlke's definition of $\overline \xi(S)$ (see Equation~\eqref{eq: quantum complement of orthogonal rank}). For this, recall that a general trace-preserving linear map $\Psi: M_n\to M_{n'}$ can be written as $\Psi(X)=\sum_{i=1}^m E_iXF_i^\dagger$ for all $X\in M_n$, where $E_1,\dots,E_m,F_1,\dots,F_m\in M_{n'\times n}$ satisfy $\sum_{i=1}^m F_i^\dagger E_i=I_n$, see~\cite[Theorem 2.26]{watrous_2018}. Let $T_\Psi=\linspan\{F_i^\dagger E_j:~i,j\in[m]\}$. We have the following:
\begin{proposition}
Let $S\subseteq M_n$, we have 
\begin{equation}\label{eq: operation interpretation of Haemers bound}
    \cH(S)=\min\{k:\ \exists\ \text{ trace-preserving linear map } \Psi: M_n\to M_k\ \text{s.t.}\ T_\Psi\subseteq S\}.
\end{equation}
\end{proposition}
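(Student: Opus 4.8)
The plan is to prove the two inequalities separately, translating directly between feasible block-matrices $B$ for $\cH(S)$ and trace-preserving maps $\Psi$ with $T_\Psi \subseteq S$; the dictionary is the rank-factorization $B = C^\dagger D$ on one side and the decomposition $\Psi(X) = \sum_i E_i X F_i^\dagger$ on the other. This mirrors the proof of Proposition~\ref{prop: alternative formulation}, with the role played there by a positive-semidefinite factorization $B = C^\dagger C$ now played by a general factorization $B = C^\dagger D$.

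For the inequality ``$\cH(S) \geq \min\{k : \ldots\}$'', I would start from an optimal $B \in M_m(S)$ with $\rk(B) = \cH(S) =: k$ and $\sum_{i=1}^m B_{i,i} = I_n$. Writing $B = C^\dagger D$ with $C, D \in M_{k \times mn}$ and grouping columns as $C = [C_1 \cdots C_m]$, $D = [D_1 \cdots D_m]$ with $C_i, D_i \in M_{k \times n}$, the blocks of $B$ become $B_{i,j} = C_i^\dagger D_j \in S$ and feasibility reads $\sum_{i=1}^m C_i^\dagger D_i = I_n$. I would then define $\Psi : M_n \to M_k$ by $\Psi(X) = \sum_{i=1}^m D_i X C_i^\dagger$; it is trace-preserving precisely because $\sum_i C_i^\dagger D_i = I_n$, and $T_\Psi = \linspan\{C_i^\dagger D_j : i,j \in [m]\} \subseteq S$. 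Thus $\Psi$ witnesses that the right-hand side is at most $k = \cH(S)$.

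For ``$\cH(S) \leq \min\{k : \ldots\}$'', I would take a trace-preserving $\Psi : M_n \to M_k$ with $T_\Psi \subseteq S$ and its decomposition $\Psi(X) = \sum_{i=1}^m E_i X F_i^\dagger$, where $E_i, F_i \in M_{k \times n}$ and $\sum_i F_i^\dagger E_i = I_n$. Setting $B := \big[F_i^\dagger E_j\big]_{i,j \in [m]} \in M_m(M_n)$, each block lies in $T_\Psi \subseteq S$, the diagonal blocks sum to $\sum_i F_i^\dagger E_i = I_n$, and $B = F^\dagger E$ with $F = [F_1 \cdots F_m]$, $E = [E_1 \cdots E_m] \in M_{k \times mn}$, whence $\rk(B) \leq k$. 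So $B$ is feasible for $\cH(S)$ and $\cH(S) \leq \rk(B) \leq k$. Minimizing over $\Psi$ and combining with the previous paragraph yields the claimed equality.

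The argument has no serious obstacle; the only points requiring care are (i) matching the trace-preservation constraint $\sum_i F_i^\dagger E_i = I_n$ with the block-diagonal-sum constraint $\sum_i B_{i,i} = I_n$, being careful about which of $C_i, D_i$ plays the role of $E_i$ versus $F_i$, and (ii) observing that $T_\Psi$ is built from the ``cross'' products $F_i^\dagger E_j$ rather than $E_i^\dagger E_j$, which is precisely what allows it to capture the off-diagonal blocks of a general, not-necessarily-positive-semidefinite matrix $B$. It is also worth remarking that, in contrast with the $\overline{\xi}$ case, self-adjointness of $S$ is not used anywhere in this equivalence.
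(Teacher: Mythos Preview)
Your proof is correct and follows essentially the same approach as the paper: factor an optimal $B=C^\dagger D$ to build a trace-preserving map $\Psi(X)=\sum_i D_iXC_i^\dagger$ for one direction, and from a trace-preserving $\Psi(X)=\sum_i E_iXF_i^\dagger$ form $B=[F_i^\dagger E_j]_{i,j}$ for the other. The paper's argument is identical in structure and detail.
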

\begin{proof}
To show ``$\geq$'', let $m \in \N$ and $B \in M_m(S)$ be feasible for $\cH(S)$ in Equation~\eqref{eq: quantum haemers bound} with $\rk(B)=k$. Let us write $B = C^\dagger D$ with $C,D \in M_{k \times mn}$. Say $C = \begin{bmatrix}C_1 & \cdots & C_m\end{bmatrix}$ and $D = \begin{bmatrix}D_1 \cdots D_m\end{bmatrix}$ where $C_i,D_i \in M_{k\times n}$. Then, feasibility of $B$ implies that $\sum_{i=1}^m C_i^\dagger D_i = I_n$. Let $\Psi: M_n\to M_k$ be defined as $\Psi(A)=\sum_{i=1}^m D_iAC_i^\dagger$ for any $A\in M_n$. Then $\Psi$ is trace-preserving and $T_\Psi=\linspan\{C_i^\dagger D_j:~i,j\in[m]\}\subseteq S$. 

Conversely, to show ``$\leq$'', let $\Psi: M_n\to M_k$ be a feasible solution of the right-hand side of Equation~\eqref{eq: operation interpretation of Haemers bound}, and let $E_1,\dots, E_m,F_1,\dots,F_m\in M_{k\times n}$ be such that $\Psi(A)=\sum_{i=1}^m E_i A F_i^\dagger$ for all $A\in M_n$ and $\sum_{i=1}^m F_i^\dagger E_i=I_n$. Define the matrix $B=\big[ F_i^\dagger E_j \big]_{i,j\in[m]}$. Then $B$ and $m$ is a feasible solution of $\cH(S)$ with $\rk(B)\leq k$.
\end{proof}

With this characterization we show that the Haemers bound is monotone with respect to noncommutative graph \emph{cohomomorphism}, a notion that was introduced in~\cite{stahlke2016} (see also~\cite{Li2018quantum}). Let $S\subseteq M_n$ and $T\subseteq M_{n'}$ be two noncommutative graphs. We say there is a \emph{cohomomorphism} from $S$ to $T$, denoted as $S\leq T$, if there exists a quantum channel $\Phi: M_n\to M_{n'}$ with Choi-Kraus operators $E_1,\dots,E_m\in M_{n\times n'}$, such that for every $B\in T$ and $i,j\in[m]$ we have $E_i^\dagger B E_j\in S$. It is called cohomomorphism because of its interpretation for graphs: for graphs $G$ and $H$ we have $S_G\leq S_H$ if and only if there is a homomorphism from $\overline{G}$ to $\overline{H}$~\cite{stahlke2016,Li2018quantum}.
\begin{proposition}\label{eq: monotone}
For noncommutative graphs $S\subseteq M_n$ and $T\subseteq M_{n'}$, $S\leq T$ implies $\cH(S)\leq\cH(T)$.
\end{proposition}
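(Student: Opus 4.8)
The plan is to combine the operational characterization of $\cH$ from Equation~\eqref{eq: operation interpretation of Haemers bound} with the cohomomorphism $\Phi$, by composing maps. Suppose $\cH(T) = k$, so by Equation~\eqref{eq: operation interpretation of Haemers bound} there is a trace-preserving linear map $\Psi\colon M_{n'}\to M_k$, written $\Psi(X)=\sum_{i=1}^{m'} E_i X F_i^\dagger$ with $\sum_i F_i^\dagger E_i = I_{n'}$, such that $T_\Psi = \linspan\{F_i^\dagger E_j : i,j\in[m']\}\subseteq T$. Let $\Phi\colon M_n \to M_{n'}$ be a quantum channel witnessing $S\leq T$, with Choi--Kraus operators $G_1,\dots,G_m\in M_{n'\times n}$ (so $\sum_a G_a^\dagger G_a = I_n$), satisfying $G_a^\dagger B G_b \in S$ for all $B\in T$ and $a,b\in[m]$.

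First I would form the composition $\Psi \circ \Phi\colon M_n\to M_k$. One checks directly that it has the operator-sum form
\[
(\Psi\circ\Phi)(A) = \sum_{i,a} (E_i G_a)\, A\, (F_i G_a)^\dagger,
\]
so its Choi--Kraus-type operators are indexed by pairs $(i,a)$ with ``$E$''-side operators $E_i G_a$ and ``$F$''-side operators $F_i G_a$. Next I would verify the trace-preservation condition: $\sum_{i,a}(F_i G_a)^\dagger (E_i G_a) = \sum_a G_a^\dagger\bigl(\sum_i F_i^\dagger E_i\bigr) G_a = \sum_a G_a^\dagger I_{n'} G_a = \sum_a G_a^\dagger G_a = I_n$, using first that $\Psi$ is trace-preserving and then that $\Phi$ is a channel. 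So $\Psi\circ\Phi$ is a trace-preserving linear map $M_n\to M_k$.

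The key step is then to show $T_{\Psi\circ\Phi}\subseteq S$. A typical generator of $T_{\Psi\circ\Phi}$ is $(F_i G_a)^\dagger (E_j G_b) = G_a^\dagger (F_i^\dagger E_j) G_b$. Since $F_i^\dagger E_j \in T_\Psi \subseteq T$ and $\Phi$ is a cohomomorphism from $S$ to $T$, we get $G_a^\dagger (F_i^\dagger E_j) G_b \in S$. Taking spans, $T_{\Psi\circ\Phi}\subseteq S$. Hence $\Psi\circ\Phi$ is feasible for the right-hand side of Equation~\eqref{eq: operation interpretation of Haemers bound} applied to $S$ with target dimension $k$, so $\cH(S)\leq k = \cH(T)$.

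I do not anticipate a serious obstacle here; the only point requiring a little care is writing the composition of two operator-sum maps in operator-sum form and bookkeeping the two index sets correctly, together with making sure the trace-preservation identity is applied in the right order (inner sum over $\Psi$'s operators first, then $\Phi$'s channel condition). Everything else is a direct substitution into the characterization already proved.
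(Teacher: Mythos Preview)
Your proposal is correct and follows essentially the same approach as the paper: take an optimal trace-preserving map $\Psi$ for $\cH(T)$ via Equation~\eqref{eq: operation interpretation of Haemers bound}, compose it with the cohomomorphism channel $\Phi$, and verify that $\Psi\circ\Phi$ is trace-preserving with $T_{\Psi\circ\Phi}\subseteq S$. The only differences are notational (you use $G_a$ where the paper uses $D_j$, and you swap the roles of $m$ and $m'$), and you are slightly more explicit in writing out the trace-preservation computation and the operator-sum form of the composition.
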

\begin{proof}
Let $\Psi: M_{n'}\to M_k$ be a trace-preserving linear map acting as $\Psi(A)=\sum_{i=1}^m E_iAF_i^\dagger$ for any $A\in M_{n'}$, where $E_1,\dots,E_m,F_1,\dots,F_m\in M_{k\times n'}$. And let $\Psi$ be feasible to $\cH(T)$ as in Equation~\eqref{eq: operation interpretation of Haemers bound}. Let $\Phi: M_n\to M_{n'}$ be a quantum channel with Choi-Kraus operators $D_1,\dots,D_{m'}\in M_{n'\times n}$, such that for any $A\in T$ and $i,j\in[m']$, $D_i^\dagger AD_j\in S$. We claim that for the linear map $\Psi': M_n\to M_k$ acting as $\Psi'(A)=\Psi(\Phi(A))$ for any $A\in M_n$, $\Psi'$ is trace preserving and $T_{\Psi'}=\linspan\{D_{j}^\dagger F_{i}^\dagger E_{i'}D_{j'}:~i,i'\in[m],~j,j'\in[m']\}\subseteq S$. $\Psi'$ being trace preserving is easy to see since $\Phi$ and $\Psi$ are trace-preserving. Since for every $i,i'\in[m]$, $F_{i}^\dagger E_{i'}\in T_\Psi\subseteq T$. We have $D_{j}^\dagger F_{i}^\dagger E_{i'}D_{j'}\in S$ for any $j,j'\in[m']$ and $i,i'\in[m]$. Thus $\Psi'$ is a feasible solution of $\cH(S)$ as in Equation~\eqref{eq: operation interpretation of Haemers bound}, which implies $\cH(S)\leq \cH(T)$.
\end{proof}

The above monotonicity result allows us to give an alternative proof of Proposition~\ref{prop: upper bound independence number}: Note that $\alpha(S)=\max\{\ell:~\cD_\ell\leq S\}$~\cite{stahlke2016} (See also~\cite[Lemma 14]{Li2018quantum}). Letting $\alpha(S)=\ell$, we obtain $\cH(S) \geq \cH(\cD_\ell) = \ell$.

\medskip

The following proposition lists some other basic properties of the bound $\cH(S)$. 
\begin{proposition} \label{prop: properties}
Let $S\subseteq M_n$ and $T\subseteq M_{n'}$ be noncommutative graphs. The following holds:
\begin{itemize}
\item[(1)]  For any $n\times n$ unitary matrix $U$ we have $\cH(S)=\cH(U^\dagger S U)$.
\item[(2)]  $\cH(S\oplus T)= \cH(S)+\cH(T)$.
\end{itemize}
\end{proposition}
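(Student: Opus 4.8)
The plan is to prove both items by directly manipulating feasible block matrices in Definition~\ref{def: quantum haemers bound}, using that $\rk(\cdot)$ is unchanged under conjugation by an invertible matrix — in particular under conjugation by $I_m \otimes U$ with $U$ unitary, and under permutation similarities. For item~(1) I would take a feasible $B \in M_m(S)$ for $\cH(S)$ with $\sum_{i=1}^m B_{i,i} = I_n$ and $\rk(B) = k$, and set $B' := (I_m \otimes U^\dagger)\, B\, (I_m \otimes U)$; writing $B = [B_{i,j}]_{i,j \in [m]}$, the $(i,j)$-block of $B'$ is $U^\dagger B_{i,j} U \in U^\dagger S U$, the diagonal blocks satisfy $\sum_i U^\dagger B_{i,i} U = U^\dagger I_n U = I_n$, and $\rk(B') = \rk(B) = k$ because $I_m \otimes U$ is unitary. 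This gives $\cH(U^\dagger S U) \le \cH(S)$; applying the same construction to $U^\dagger S U$ with the unitary $U^\dagger$, and using $U(U^\dagger S U)U^\dagger = S$, yields the reverse inequality. This part is routine.

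For item~(2), recall $S \oplus T = \{A \oplus B : A \in S,\ B \in T\} \subseteq M_{n+n'}$, where $A \oplus B$ is the block-diagonal matrix with diagonal blocks $A$ and $B$; in particular every element of $S \oplus T$ is block-diagonal with respect to $\C^{\,n+n'} = \C^n \oplus \C^{n'}$. To prove $\cH(S \oplus T) \le \cH(S) + \cH(T)$ I would take optimal feasible solutions $B_1 \in M_{m_1}(S)$ and $B_2 \in M_{m_2}(T)$ and assemble the $(m_1+m_2)\times(m_1+m_2)$ block matrix $B$ over $M_{n+n'}$ whose $(i,j)$-block is $(B_1)_{i,j}\oplus 0_{n'}$ for $i,j \le m_1$, is $0_n \oplus (B_2)_{i-m_1,\,j-m_1}$ for $i,j > m_1$, and is $0$ otherwise. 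Then $B \in M_{m_1+m_2}(S\oplus T)$, its diagonal blocks sum to $(I_n \oplus 0_{n'}) + (0_n \oplus I_{n'}) = I_{n+n'}$, and $B$ is itself block-diagonal with the two diagonal pieces permutation-similar to $B_1 \oplus 0$ and $B_2 \oplus 0$; hence $\rk(B) = \rk(B_1) + \rk(B_2) = \cH(S) + \cH(T)$. (Alternatively, one can combine the trace-preserving maps of Equation~\eqref{eq: operation interpretation of Haemers bound} into a single block-diagonal trace-preserving map $M_{n+n'} \to M_{\cH(S)+\cH(T)}$.)

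For the reverse inequality $\cH(S\oplus T) \ge \cH(S)+\cH(T)$ I would start from an optimal feasible $B \in M_m(S\oplus T)$ and decompose each block as $B_{i,j} = (B_{i,j})^S \oplus (B_{i,j})^T$ with $(B_{i,j})^S \in S$ and $(B_{i,j})^T \in T$. The matrices $B^S := [(B_{i,j})^S]_{i,j}$ and $B^T := [(B_{i,j})^T]_{i,j}$ are feasible for $\cH(S)$ and $\cH(T)$ respectively, since $\sum_i B_{i,i} = I_{n+n'}$ forces $\sum_i (B_{i,i})^S = I_n$ and $\sum_i (B_{i,i})^T = I_{n'}$. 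The crux is that the permutation similarity reordering the $m(n+n')$ row/column indices so that the $\C^n$-coordinates of all $m$ blocks come first (and the $\C^{n'}$-coordinates last) turns $B$ into the block-diagonal matrix $B^S \oplus B^T$ — the cross blocks vanish precisely because each $B_{i,j} \in S\oplus T$ is block-diagonal — so $\rk(B) = \rk(B^S) + \rk(B^T) \ge \cH(S) + \cH(T)$, which combined with the previous paragraph gives item~(2).

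These arguments are essentially bookkeeping; the one place to take care is making the permutation-similarity claims in item~(2) precise — pinning down the reordering of the $m(n+n')$ indices and checking that the $\C^n$-to-$\C^{n'}$ off-diagonal blocks are zero — which is exactly what yields $\rk(B) = \rk(B^S)+\rk(B^T)$. Nonemptiness of every feasible region is automatic since operator systems contain the identity (take $B = I$, $m = 1$).
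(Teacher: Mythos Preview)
Your proof is correct and follows essentially the same approach as the paper: conjugation by $I_m\otimes U$ for item~(1), and for item~(2) the same permutation-similarity argument $M_m(S\oplus T)\simeq M_m(S)\oplus M_m(T)$ to split the rank. The only cosmetic difference is that for the inequality $\cH(S\oplus T)\le\cH(S)+\cH(T)$ the paper pads to $m_1=m_2=m$ and places $X_{i,j}\oplus Y_{i,j}$ in each block rather than using your $(m_1+m_2)$-block construction, but the two are equivalent up to permutation and zero-padding.
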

\begin{proof}
(1)~It suffices to show $\cH(U^\dagger S U) \leq \cH(S)$. Let $m$, $B$ be a feasible solution of $\cH(S)$, then $m$, $B'=(I_m\otimes U^\dagger)B(I_m\otimes U)$ is a feasible solution of $\cH(U^\dagger S U)$ and $\rk(B')= \rk(B)$, therefore $\cH(U^\dagger S U) \leq \cH(S)$. 

(2)~We first show that $\cH(S\oplus T)\leq \cH(S)+\cH(T)$. Let $B_1$ and $m_1$ be a feasible solution of $\cH(S)$ and $B_2$ and $m_2$ be a feasible solution of $\cH(T)$. Without loss of generality we assume $m_1=m_2=m$. Let $B_1=\begin{bmatrix}X_{i,j}\end{bmatrix}_{i,j\in[m]}$ and $B_2=\begin{bmatrix}Y_{i,j}\end{bmatrix}_{i,j\in[m]}$ with $X_{i,j}\in S$ and $Y_{i,j}\in T$. The matrix \[
B=\begin{bmatrix}\begin{bmatrix}X_{i,j}&0\\0&Y_{i,j}\end{bmatrix}\end{bmatrix}_{i,j\in [m]}\] and $m$ is a feasible solution of $\cH(S\oplus T)$ with $\rk(B)=\rk(B_1)+\rk(B_2)$.

We now show that $\cH(S\oplus T)\geq \cH(S)+\cH(T)$. Let $B$ and $m$ be a feasible solution of $\cH(S\oplus T)$. 
We have $M_m(S\oplus T) \simeq M_m(S) \oplus M_m(T)$ where the isomorphism is given by a permutation, let us denote the permutation with $\pi$. Then $\pi B \pi^\dagger = B_1 \oplus B_2$ where $B_1 \in M_m(S)$ and $B_2 \in M_m(T)$ are feasible for $\cH(S)$ and $\cH(T)$ respectively. It remains to observe that 
\[
    \rank(B) = \rank(B_1 \oplus B_2) = \rank(B_1)+\rank(B_2) \geq \cH(S) + \cH(T). \qedhere
\]
\end{proof}

Let us now compute the value of $\cH(S)$ for some basic noncommutative graphs $S$. 

\begin{example}
Let $S = \C I_n$ be the subspace containing only scalar multiples of the $n\times n$ identity matrix, then $\cH(S)=n$. 

Indeed, any feasible solution $B$ of $\cH(\C I_n)$ has at least one non-zero diagonal block $B_{i,i} \in S=\C I_n$. This diagonal block provides a submatrix of $B$ with rank $n$. Therefore $\cH(\C I_n) \geq n$. It is also easy to see that $B=I_n \in M_1(\C I_n)$ provides a feasible solution with rank exactly $n$. 
\end{example}
\begin{example} Let $S=\cD_n$ be the subspace of $M_n$ containing all diagonal matrices, then, by part~(2) of Proposition~\ref{prop: properties}, we have $\cH(\cD_n) = n$.
\end{example}

\begin{lemma} \label{lem: equal to one}
Let $S \subseteq M_n$ be a noncommutative graph. Then $\cH(S) =1$ if and only if $S = M_n$.
\end{lemma}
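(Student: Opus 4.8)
The plan is to prove both directions directly from Definition \ref{def: quantum haemers bound}. The ``if'' direction is the easy one: if $S = M_n$, then taking $m = n$ and $B = \big[\ket{j}\!\bra{i}\big]_{i,j\in[n]} \in M_n(M_n)$ (the ``swap''-type matrix whose $(i,j)$-block is $\ket{j}\!\bra{i}$), we see that $\sum_{i=1}^n B_{i,i} = \sum_{i=1}^n \ket{i}\!\bra{i} = I_n$, and $B$ is a rank-one matrix since it can be written as $v^\dagger v$ where $v = \begin{bmatrix} \ket{1}\!\bra{1} & \cdots & \ket{n}\!\bra{n}\end{bmatrix}$ — more precisely $B = w w^\dagger$ for the single column vector $w \in \C^{n^2}$ obtained by stacking $\ket{1},\ldots,\ket{n}$. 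So $\cH(M_n) \leq 1$, and since $\cH(S) \geq 1$ always (a feasible $B$ cannot be the zero matrix as its diagonal blocks sum to $I_n \neq 0$), we get $\cH(M_n) = 1$.

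For the ``only if'' direction, suppose $\cH(S) = 1$, so there exist $m \in \N$ and $B \in M_m(S)$ with $\sum_{i=1}^m B_{i,i} = I_n$ and $\rk(B) = 1$. Write $B = u v^\dagger$ for column vectors $u, v \in \C^{mn}$; partitioning according to the block structure, $u = (u_1,\ldots,u_m)$ and $v = (v_1,\ldots,v_m)$ with $u_i, v_i \in \C^n$, so that $B_{i,j} = u_i v_j^\dagger \in S$ for all $i,j\in[m]$. The key point is that $I_n = \sum_{i=1}^m B_{i,i} = \sum_{i=1}^m u_i v_i^\dagger$, so the $u_i$ must together span $\C^n$ (otherwise the image of $\sum_i u_i v_i^\dagger$ would be a proper subspace, contradicting that it equals $I_n$); likewise the $v_i$ span $\C^n$. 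Now pick indices $i_1,\ldots,i_n$ with $\{u_{i_1},\ldots,u_{i_n}\}$ a basis of $\C^n$ and indices $j_1,\ldots,j_n$ with $\{v_{j_1},\ldots,v_{j_n}\}$ a basis of $\C^n$. Then the matrices $\{u_{i_a} v_{j_b}^\dagger : a,b\in[n]\}$ all lie in $S$, and I claim they span all of $M_n$: indeed, since each family is a basis, the rank-one matrices $u_{i_a} v_{j_b}^\dagger$ are linearly independent (their span has dimension $n^2$), hence span $M_n$. Therefore $M_n \subseteq S$, and since $S \subseteq M_n$ by assumption, $S = M_n$.

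I expect the ``only if'' direction to be where the (minor) care is needed — specifically the two spanning arguments. The claim that $\sum_i u_i v_i^\dagger = I_n$ forces $\linspan\{u_i\} = \C^n$ is immediate from $\im\big(\sum_i u_i v_i^\dagger\big) \subseteq \linspan\{u_i\}$ and $\im(I_n) = \C^n$; the symmetric statement for the $v_i$ follows by taking adjoints, using $\sum_i v_i u_i^\dagger = I_n^\dagger = I_n$. The claim that $\{u_{i_a} v_{j_b}^\dagger\}_{a,b}$ is linearly independent whenever $\{u_{i_a}\}_a$ and $\{v_{j_b}\}_b$ are bases is the standard fact that if $P, Q \in \GL_n(\C)$ are the change-of-basis matrices, then $\{u_{i_a} v_{j_b}^\dagger\}$ is the image of the standard basis $\{\ket{a}\!\bra{b}\}$ of $M_n$ under the invertible map $X \mapsto P X Q^\dagger$. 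This is routine, so there is no real obstacle; the lemma is essentially a direct unpacking of the definition.
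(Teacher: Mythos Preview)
Your proof is correct and follows essentially the same route as the paper: write a rank-one feasible $B=uv^\dagger$, block-decompose $u$ and $v$, use $\sum_i u_i v_i^\dagger = I_n$ to force both families to span $\C^n$, and conclude that the rank-one blocks $u_i v_j^\dagger \in S$ span $M_n$. One small slip in the ``if'' direction: the matrix with $(i,j)$-block $\ket{j}\!\bra{i}$ is the \emph{swap} operator, which has full rank $n^2$, not rank $1$; your subsequent ``more precisely $B=ww^\dagger$'' is the correct construction (its $(i,j)$-block is $\ket{i}\!\bra{j}$), and this is exactly the paper's choice $u=\oplus_i \ket{i}$, $B=uu^\dagger$.
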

\begin{proof}
Suppose $S = M_n$ and set $u = \oplus_{i=1}^n \ket{i}$. Then $B = u u^\dagger$ is feasible for $\cH(M_n)$ and has rank equal to $1$. Therefore $\cH(M_n)=1$. 

For the other direction, let $S$ be a noncommutative graph for which $\cH(S)=1$. Let $B \in \M_m(S)$ be a feasible solution to $\cH(S)$ with rank equal to $1$. Say $B = uv^\dagger$ where $u,v \in \C^{mn}$. Decompose $u$ as $u=\oplus_{i=1}^m \ket{\psi_i}$ where $\ket{\psi_i} \in \C^n$ for each $i \in [m]$. Similarly, let $v = \oplus_{i=1}^m \ket{\phi_i}$. From the feasibility of $B$ it follows that $\sum_{i=1}^m \ket{\psi_i}\!\bra{\phi_i} = I_n$. In particular this implies that 
\[
\linspan\{\ket{\psi_1},\ldots, \ket{\psi_m}\} = \C^n = \linspan\{\ket{\phi_1},\ldots,\ket{\phi_m}\}.
\]
This in turn implies that $\linspan\{\ket{\psi_i}\!\bra{\phi_j}: i,j \in [m]\} = M_n$. At the same time, since $B \in \M_m(S)$, we have that $\ket{\psi_i}\!\bra{\phi_j} \in S$ for each $i,j\in [m]$. Therefore $S=M_n$. 
\end{proof}
\begin{example}
Let $S_n=\linspan\{I_n,\ket{i}\!\bra{j}:\ i\neq j\in[n]\}\subseteq M_n$. It follows from the above Lemma~\ref{lem: equal to one} that $\cH(S_n) \geq 2$ whenever $n \geq 2$. When $n=2$, i.e., for $S_2 = \left\{ \begin{pmatrix} a & b \\ c & a\end{pmatrix}: a,b,c \in \C\right\}$, this lower bound is tight. Indeed, $B = I_2 \in M_1(S_2)$ is feasible for $\cH(S_2)$ and has rank exactly equal to $2$. 

For $k\geq 2$, we have $\cH(S_k\otimes S_{k^2})\leq\overline{\xi}(S_k\otimes S_{k^2})\leq k^2<k^3\leq\vartheta(S_k\otimes S_{k^2})\leq\tilde{\vartheta}(S_k\otimes S_{k^2})$ where all but the first inequality were shown in~\cite{8355678}. Thus, the ratio $\cH(S)/\vartheta(S)$ (and $\cH(S)/\tilde{\vartheta}(S)$) can be arbitrarily small.
\end{example}

\begin{example}
In~\cite{Wang2018}, they presented a family of noncommutative graphs $S_\gamma$ (with parameter $\gamma$) where the Duan-Severini-Winter noncommutative analogue of $\vartheta$,  $\tilde{\vartheta}(S_\gamma)$ can be strictly larger than the entanglement-assisted Shannon capacity. Explicitly, 
\[
S_\gamma=\linspan\{\ket{1}\!\bra{3},\ket{3}\!\bra{1},\sin^2\gamma\proj{2}+\proj{3},\cos^2\gamma\proj{2}+\proj{1}\}\subseteq M_3,
\]
and $\tilde \vartheta(S_\gamma)=2 + \cos^2\gamma +\cos^{-2}\gamma\geq 4$ when $\gamma\in[0,\pi/2)$. On the other hand, note that $\cH(S_\gamma)\leq 3$ for every $\gamma$, which is strictly smaller than $\tilde \vartheta(S_\gamma)$. 
\end{example}

\section*{Acknowledgement}
We thank Monique Laurent for helpful discussions, Andreas Winter for pointing out the computability of our quantum Haemers bound (Corollary~\ref{cor: computability}) through Hilbert’s Nullstellensatz, and Christian Majenz for pointing out the reference~\cite{watrous_2018} related to trace-preserving linear maps.

\raggedright
\bibliographystyle{alpha}
\bibliography{all}

\begin{thebibliography}{HPRS17}

\bibitem[BC19]{bukh2018fractional}
B.~{Bukh} and C.~{Cox}.
\newblock {On a Fractional Version of Haemers’ Bound}.
\newblock {\em IEEE Transactions on Information Theory}, 65(6):3340--3348, June
  2019.

\bibitem[Bla13]{blasiak2013graph}
Anna Blasiak.
\newblock {\em {A Graph-theoretic Approach to Network Coding}}.
\newblock PhD thesis, Cornell University, 2013.

\bibitem[BS07]{beigi2007complexity}
Salman Beigi and Peter~W. Shor.
\newblock {On the Complexity of Computing Zero-error and Holevo Capacity of
  Quantum Channels}.
\newblock arXiv:0709.2090, 2007.

\bibitem[BTW19]{boreland2019sandwich}
Gareth Boreland, Ivan~G. Todorov, and Andreas Winter.
\newblock {Sandwich Theorems and Capacity Bounds for Non-commutative Graphs}.
\newblock arXiv:1907.11504, 2019.

\bibitem[CCH11]{cubitt2011}
T.~S. Cubitt, J.~Chen, and A.~W. Harrow.
\newblock {Superactivation of the Asymptotic Zero-Error Classical Capacity of a
  Quantum Channel}.
\newblock {\em IEEE Transactions on Information Theory}, 57(12):8114--8126, Dec
  2011.

\bibitem[CLO15]{CLO15}
D.A. Cox, J.~Little, and D.~O'Shea.
\newblock {\em Ideals, Varieties, and Algorithms: An Introduction to
  Computational Algebraic Geometry and Commutative Algebra}.
\newblock Springer Publishing Company, Incorporated, 4th edition, 2015.

\bibitem[DSW13]{duan2013}
Runyao Duan, Simone Severini, and Andreas Winter.
\newblock {Zero-Error Communication via Quantum Channels, Noncommutative
  Graphs, and a Quantum Lov\'{a}sz Number}.
\newblock {\em IEEE Transactions on Information Theory}, 59(2):1164--1174, Feb
  2013.

\bibitem[Dua09]{duan2009super}
Runyao Duan.
\newblock {Super-activation of Zero-error Capacity of Noisy Quantum Channels}.
\newblock arXiv: 0906.2527, 2009.

\bibitem[GW90]{54907}
F.~{Guo} and Y.~{Watanabe}.
\newblock {On Graphs in which the Shannon Capacity is Unachievable by Finite
  Product}.
\newblock {\em IEEE Transactions on Information Theory}, 36(3):622--623, May
  1990.

\bibitem[Hae78]{Haemers1978}
Willem Haemers.
\newblock {An Upper Bound for the Shannon Capacity of a Graph}.
\newblock {\em Colloquia Mathematica Societatis J\'{a}nos Bolyai}, 25:267--272,
  1978.

\bibitem[Hae79]{haemers1979some}
Willem Haemers.
\newblock {On Some Problems of Lov{\'a}sz Concerning the Shannon Capacity of a
  Graph}.
\newblock {\em IEEE Transactions on Information Theory}, 25(2):231--232, 1979.

\bibitem[HPRS17]{Hogben2017}
Leslie Hogben, Kevin~F. Palmowski, David~E. Roberson, and Simone Severini.
\newblock {Orthogonal Representations, Projective Rank, and Fractional Minimum
  Positive Semidefinite Rank: Connections and New Directions}.
\newblock {\em Electronic Journal of Linear Algebra}, 32:98--115, 2017.

\bibitem[Kar72]{MR0378476}
Richard~M. Karp.
\newblock {Reducibility among Combinatorial Problems}.
\newblock In {\em Complexity of computer computations ({P}roc. {S}ympos., {IBM}
  {T}homas {J}. {W}atson {R}es. {C}enter, {Y}orktown {H}eights, {N}.{Y}.,
  1972)}, pages 85--103. Plenum, New York, 1972.

\bibitem[Lov79]{lovasz1979shannon}
L{\'a}szl{\'o} Lov{\'a}sz.
\newblock {On the Shannon Capacity of a Graph}.
\newblock {\em IEEE Transactions on Information Theory}, 25(1):1--7, 1979.

\bibitem[LPT18]{8355678}
R.~H. {Levene}, V.~I. {Paulsen}, and I.~G. {Todorov}.
\newblock {Complexity and Capacity Bounds for Quantum Channels}.
\newblock {\em IEEE Transactions on Information Theory}, 64(10):6917--6928, Oct
  2018.

\bibitem[LZ18]{Li2018quantum}
Yinan Li and Jeroen Zuiddam.
\newblock {Quantum Asymptotic Spectra of Graphs and Non-commutative Graphs, and
  Quantum Shannon Capacities}.
\newblock arXiv:1810.00744, 2018.

\bibitem[MR16]{manvcinska2016quantum}
Laura Man{\v{c}}inska and David~E. Roberson.
\newblock {Quantum Homomorphisms}.
\newblock {\em J.~Combin. Theory Ser.~B}, 118:228--267, 2016.

\bibitem[NC10]{Nielsen2010}
Michael~A. Nielsen and Isaac~L. Chuang.
\newblock {\em {Quantum Computation and Quantum Information}}.
\newblock Cambridge university press, 2010.

\bibitem[OP15]{ORTIZ2015128}
Carlos~M. Ortiz and Vern~I. Paulsen.
\newblock {Lovász Theta Type Norms and Operator Systems}.
\newblock {\em Linear Algebra and its Applications}, 477:128--147, 2015.

\bibitem[Pee96]{peeters1996orthogonal}
Ren{\'e} Peeters.
\newblock Orthogonal representations over finite fields and the chromatic
  number of graphs.
\newblock {\em Combinatorica}, 16(3):417--431, 1996.

\bibitem[PS19]{polak2018new}
Sven Polak and Alexander Schrijver.
\newblock {New lower bound on the Shannon capacity of C7 from circular graphs}.
\newblock {\em Information Processing Letters}, 143:37--40, 2019.

\bibitem[Sha56]{MR0089131}
Claude~E. Shannon.
\newblock {The Zero Error Capacity of a Noisy Channel}.
\newblock {\em Institute of Radio Engineers, Transactions on Information
  Theory}, IT-2:8--19, 1956.

\bibitem[Sta16]{stahlke2016}
Dan Stahlke.
\newblock {Quantum Zero-Error Source-Channel Coding and Non-Commutative Graph
  Theory}.
\newblock {\em IEEE Transactions on Information Theory}, 62(1):554--577, Jan
  2016.

\bibitem[Wat18]{watrous_2018}
John Watrous.
\newblock {\em {The Theory of Quantum Information}}.
\newblock Cambridge University Press, 2018.

\bibitem[WD18]{Wang2018}
X.~{Wang} and R.~{Duan}.
\newblock {Separation Between Quantum Lov\'{a}sz Number and
  Entanglement-Assisted Zero-Error Classical Capacity}.
\newblock {\em IEEE Transactions on Information Theory}, 64(3):1454--1460,
  March 2018.

\bibitem[Wea17]{weaver2017quantum}
Nik Weaver.
\newblock {A “Quantum” Ramsey Theorem for Operator Systems}.
\newblock {\em Proceedings of the American Mathematical Society},
  145(11):4595--4605, 2017.

\bibitem[Wea19]{weaver2018quantum}
Nik Weaver.
\newblock {The ``Quantum'' Turan Problem for Operator Systems}.
\newblock {\em Pacific Journal of Mathematics}, 301(1):335--349, 2019.

\end{thebibliography}

\end{document}